\setlist[enumerate]{leftmargin=*}
\newcommand*{\MyLaw}{\mathrm{law}}
\newcommand*{\eqlawU}{\ensuremath{\mathop{\overset{\MyLaw}{=}}}} 
\newcommand*{\eqlaw}{\mathop{\overset{\MyLaw}{\resizebox{\widthof{\eqlawU}}{\heightof{=}}{=}}}}
\newcommand\independent{\protect\mathpalette{\protect\independenT}{\perp}}
\def\independenT#1#2{\mathrel{\rlap{$#1#2$}\mkern2mu{#1#2}}}
\DeclareMathOperator{\E}{\mathbb{E}}
\newcommand{\RR}{\mathbb{R}}
\newsavebox\tboxa
\newsavebox\tboxb
\newlength\tdima
\newcommand*{\oversymb}{\mathpalette\@oversymb}
\newcommand*{\@oversymb}[2]{
    \sbox{\tboxa}{$\m@th#1\mathrm{#2}$}
    \setbox\tboxb\null
    \ht\tboxb\ht\tboxa
    \dp\tboxb\dp\tboxa
    \wd\tboxb\wd\tboxa
    \sbox{\tboxa}{$\m@th#1{#2}$}
    \setlength\tdima{\the\wd\tboxa}
    \addtolength\tdima{-\the\wd\tboxb}
    \sbox{\tboxb}{$\m@th#1\hskip\tdima\overline{\xusebox{\tboxb}}$}
    \rlap{\usebox\tboxb}{\usebox\tboxa}}
\newcommand*{\xusebox}[1]{\mathord{{\usebox{#1}}}}
\newtheorem{theorem}{Theorem}[section]
\newtheorem{proposition}[theorem]{Proposition}
\newtheorem{lemma}[theorem]{Lemma}
\let\oldremark\remark
\renewcommand{\remark}{\oldremark\normalfont}
\numberwithin{equation}{section}
\date{}
\begin{document}

\title{Exponential integrability properties of Euler discretization\\ schemes for the Cox-Ingersoll-Ross process}

\author{\scshape{andrei cozma}\thanks{\footnotesize\scshape{Mathematical Institute, University of Oxford, Oxford, OX2 6GG, UK}\newline
\hspace*{1.8em}andrei.cozma@maths.ox.ac.uk, christoph.reisinger@maths.ox.ac.uk} \and \scshape{christoph reisinger\footnotemark[1]}}

\maketitle

\begin{abstract}\noindent

We analyze exponential integrability properties of the Cox-Ingersoll-Ross (CIR) process and its Euler discretizations with various types of truncation and reflection at $0$. These properties play a key role in establishing the finiteness of moments and the strong convergence of numerical approximations for a class of stochastic differential equations arising in finance. We prove that both implicit and explicit Euler-Maruyama discretizations for the CIR process preserve the exponential integrability of the exact solution for a wide range of parameters, and find lower bounds on the explosion time.

\vspace{1em}\noindent
\textbf{Keywords:} Cox-Ingersoll-Ross process, exponential integrability, numerical approximation, explicit Euler scheme, implicit Euler scheme, stochastic volatility model.

\vspace{.5em}\noindent
\textbf{Mathematics Subject Classification (2010):} 60H35, 65C30
\end{abstract}

\section{Introduction}\label{sec:intro}

The Cox-Ingersoll-Ross process was originally proposed in \citet{Cox:1985} for short-term interest rate modeling, and is the solution to the following stochastic differential equation (SDE):
\begin{equation}\label{eq1.1}
dy_{t} = k_{y}(\theta_{y}-y_{t})dt + \xi_{y}\sqrt{y_{t}}\,dW_{t}\hspace{.5pt},
\end{equation}
where $W=(W_{t})_{t\geq0}$ is a one-dimensional Brownian motion, whereas $y_{0}$, $k_{y}$, $\theta_{y}$ and $\xi_{y}$ are strictly positive real numbers. According to \citet{Karatzas:1991}, \eqref{eq1.1} admits a unique strong solution, which is strictly positive when the Feller condition is satisfied, i.e., when $2k_{y}\theta_{y}>\xi_{y}^{2}$. The desirable features of the CIR process under consideration, such as non-negativity and mean-reversion, make it very popular when modeling interest rates or variances, e.g., in Heston's stochastic volatility model \citep{Heston:1993}. The Feller condition is typically satisfied in practice in the former case, but often fails to hold in the latter case.

The conditional distribution of the CIR process is noncentral chi-squared and hence its increments can be simulated exactly. On the other hand, discretization schemes are usually preferred when the entire sample path of the CIR process has to be simulated, or when the process is part of a system of SDEs. For instance, when pricing path-dependent financial derivatives written on an underlying process $S=(S_{t})_{t\in[0,T]}$ modeled by a $d$-dimensional SDE, with CIR dynamics in one or more dimensions, we need to evaluate
\begin{equation}\label{eq1.2}
U = \E\left[f(\hspace{.25pt}S\hspace{.75pt})\right],
\end{equation}
where $f:([0,T],\RR^{d})\mapsto\RR$ is the discounted payoff. In particular, this class of SDEs contains the popular Heston model and extensions thereof, such as stochastic interest rates \citep{Grzelak:2011, Ahlip:2013} or stochastic-local volatility \citep{Stoep:2014}. However, one can rarely find an explicit formula for the quantity in \eqref{eq1.2}, in which case we approximate the solution to the SDE via a discretization scheme and employ Monte Carlo simulation methods \citep[see][]{Glasserman:2003}. Since we cannot use the standard Euler-Maruyama scheme to approximate $(y_{t})_{t\in[0,T]}$ defined in \eqref{eq1.1} because of the non-zero probability of the approximation process becoming negative, we set it equal to zero when it turns negative (absorption fix) or reflect it in the origin (reflection fix). An overview of the explicit Euler schemes considered thus far in the literature can be found in \citet{Lord:2010}. Alternatively, we can use an implicit scheme to discretize the CIR process.

Although weak convergence is important when estimating expectations of payoffs, strong convergence may be required for complex path-dependent derivatives and plays a crucial role in multilevel Monte Carlo methods \citep{Giles:2008}. An important step in deriving strong convergence is proving the finiteness of moments of order higher than one of the process and its approximation \citep{Higham:2002, Cozma:2015b}. In addition, for a number of stochastic volatility models, moments of order higher than one can explode in finite time \citep{Andersen:2007}. This, however, can cause serious problems in practice when valuing securities whose payoffs have super-linear growth, as is the case with some commonly traded fixed income contracts. For instance, the risk-neutral valuation of CMS swaps and caps or Eurodollar futures contracts involves the evaluation of the second moment \citep{Andersen:2007}. Therefore, moment explosions may lead to infinite prices of derivatives. The same issue can also be observed for Euler approximations of SDEs with super-linearly growing drift or diffusion coefficients \citep{Hutzenthaler:2011, Jentzen:2015}.

Hence, we need to examine the stability of moments of the actual and the approximated processes, a problem directly related to the exponential integrability of the CIR process and its discretization \citep{Cozma:2015b}. \citet{Hutzenthaler:2014} proved that a class of stopped increment-tamed Euler approximations for nonlinear systems of SDEs with locally Lipschitz drift and diffusion coefficients retain the exponential integrability of the exact solution under some mild assumptions. However, the diffusion coefficient in \eqref{eq1.1} is not locally Lipschitz, so their analysis does not apply to the present work. \citet{Cozma:2015b} derived the exponential integrability of full
truncation Euler approximations \citep{Lord:2010} for the CIR process up to a critical time. In the present work, we first extend the aforementioned result to more general exponential functionals of the CIR process, and then, we prove that the drift-implicit and a number of explicit Euler discretizations for the CIR process preserve exponential integrability properties.

The paper is structured as follows. In Section~\ref{sec:scheme}, we discuss the discretization schemes and their strong convergence. In Section~\ref{sec:integrability}, we deduce the uniform exponential integrability of functionals of the CIR process and its explicit and implicit Euler discretizations. Section~\ref{sec:Heston} examines moment stability for a particular model (the Heston model) in more detail. Finally, Section~\ref{sec:conclusion} summarizes the results and outlines possible future work.

\section{The discretization schemes}\label{sec:scheme}

The classical Euler-Maruyama scheme does not preserve the non-negativity of the process and hence it is not well-defined when applied to \eqref{eq1.1} directly due to the square-root diffusion coefficient. A number of corrections have been proposed in the literature, by either setting the process equal to zero when it turns negative, or by reflecting it in the origin. Consider a uniform grid: $\delta t=T\hspace{-.5pt}/N$, $t_{n}=n\hspace{.5pt}\delta t,\; \forall\hspace{.5pt} n\in\{0,1,...,N\}$. The \textit{partial truncation Euler (PTE) scheme}
\begin{equation}\label{eq2.1}
\tilde{y}_{t_{n+1}} = \tilde{y}_{t_{n}} + k_{y}(\theta_{y}-\tilde{y}_{t_{n}})\delta t + \xi_{y}\sqrt{\tilde{y}_{t_{n}}^{+}}\,\delta W_{t_{n}},
\end{equation}
where $y^{+}=\max\left(0,y\right)$ and $\delta W_{t_{n}}=W_{t_{n+1}}-W_{t_{n}}$, was proposed in \citet{Deelstra:1998}, whereas the \textit{full truncation Euler (FTE) scheme}
\begin{equation}\label{eq2.2}
\tilde{y}_{t_{n+1}} = \tilde{y}_{t_{n}} + k_{y}(\theta_{y}-\tilde{y}_{t_{n}}^{+})\delta t + \xi_{y}\sqrt{\tilde{y}_{t_{n}}^{+}}\,\delta W_{t_{n}}
\end{equation}
was studied in \citet{Lord:2010}. The \textit{absorption (ABS) scheme} reads as
\begin{equation}\label{eq2.3}
\tilde{y}_{t_{n+1}} = \tilde{y}_{t_{n}}^{+} + k_{y}(\theta_{y}-\tilde{y}_{t_{n}}^{+})\delta t + \xi_{y}\sqrt{\tilde{y}_{t_{n}}^{+}}\,\delta W_{t_{n}}.
\end{equation}
For the schemes \eqref{eq2.1} -- \eqref{eq2.3}, the piecewise constant time-continuous interpolation is defined as $\hspace{-.5pt}\oversymb{\hspace{.5pt}Y}_{\hspace{-2.5pt}t}=\tilde{y}_{t_{n}}^{+}$, whenever $t \in [t_{n},t_{n+1})$. The \textit{reflection (REF) scheme}
\begin{equation}\label{eq2.4}
\tilde{y}_{t_{n+1}} = \tilde{y}_{t_{n}} + k_{y}(\theta_{y}-\tilde{y}_{t_{n}})\delta t + \xi_{y}\sqrt{|\tilde{y}_{t_{n}}|}\,\delta W_{t_{n}}
\end{equation}
was introduced in \citet{Higham:2005}, and we define $\hspace{-.5pt}\oversymb{\hspace{.5pt}Y}_{\hspace{-2.5pt}t}=|\tilde{y}_{t_{n}}|$, whenever $t \in [t_{n},t_{n+1})$. The \textit{symmetrized Euler (SYM) scheme}
\begin{equation}\label{eq2.5}
\tilde{y}_{t_{n+1}} = \big|\tilde{y}_{t_{n}} + k_{y}(\theta_{y}-\tilde{y}_{t_{n}})\delta t + \xi_{y}\sqrt{\tilde{y}_{t_{n}}}\,\delta W_{t_{n}}\big|
\end{equation}
was studied in \citet{Bossy:2007}, and we let $\hspace{-.5pt}\oversymb{\hspace{.5pt}Y}_{\hspace{-2.5pt}t}=\tilde{y}_{t_{n}}$, whenever $t \in [t_{n},t_{n+1})$. Finally, assuming that the Feller condition holds and applying It\^o's formula to $x_{t}=\sqrt{y_{t}}$ leads to
\begin{equation}\label{eq2.6}
dx_{t} = \big(\alpha x_{t}^{-1}+\beta x_{t}\big)dt + \gamma\hspace{1pt}dW_{t}\hspace{.5pt},
\end{equation}
where
\begin{equation}\label{eq2.7}
\alpha = \frac{4k_{y}\theta_{y}-\xi_{y}^{2}}{8}\hspace{1pt},\hspace{2.5pt} \beta = -\hspace{1pt}\frac{k_{y}}{2} \hspace{2pt}\text{ and }\hspace{2pt} \gamma = \frac{\xi_{y}}{2}\hspace{1pt}.
\end{equation}
The \textit{drift-implicit (square-root) Euler scheme}
\begin{equation}\label{eq2.8}
\tilde{x}_{t_{n+1}} = \tilde{x}_{t_{n}} + \big(\alpha\tilde{x}_{t_{n+1}}^{-1}+\beta\tilde{x}_{t_{n+1}}\big)\delta t + \gamma\hspace{1pt}\delta W_{t_{n}}
\end{equation}
was proposed in \citet{Alfonsi:2005} and later studied in \citet{Dereich:2012}. Because $\alpha,\gamma>0$ and $\beta<0$, \eqref{eq2.8} has the unique positive solution
\begin{equation}\label{eq2.9}
\tilde{x}_{t_{n+1}} = \frac{\tilde{x}_{t_{n}}+\gamma\hspace{1pt}\delta W_{t_{n}}}{2(1-\beta\delta t)} + \sqrt{\frac{(\tilde{x}_{t_{n}}+\gamma\hspace{1pt}\delta W_{t_{n}})^{2}}{4(1-\beta\delta t)^{2}}+\frac{\alpha\hspace{.5pt}\delta t}{1-\beta\delta t}}\hspace{1pt}.
\end{equation}
This method is also called the \textit{backward Euler-Maruyama (BEM) scheme} \citep{Szpruch:2014}. We let $\hspace{-.5pt}\oversymb{\hspace{.5pt}Y}_{\hspace{-2.5pt}t}=\tilde{x}_{t_{n}}^{2}$, whenever $t \in [t_{n},t_{n+1})$.

The classical convergence theory \citep{Kloeden:1999, Higham:2002} does not apply to the CIR process because the square-root diffusion coefficient is not Lipschitz. Consequently, alternative approaches have been employed by the authors to prove the strong or weak convergence of their particular discretization. Strong convergence, either without a rate or with a logarithmic rate, of the partial truncation, the full truncation, the reflection and the symmetrized Euler schemes was established in \citet{Deelstra:1998}, \citet{Alfonsi:2005}, \citet{Higham:2005} and \citet{Lord:2010}. Strong convergence of order $1/2$ of the symmetrized and the drift-implicit Euler schemes was proved in \citet{Berkaoui:2008} and \citet{Dereich:2012}, respectively, albeit under some very restrictive assumptions for the former. Recently, \citet{Alfonsi:2013} and \citet{Szpruch:2014} improved the rate of strong convergence of the drift-implicit Euler scheme to $1$. To the best of our knowledge, convergence properties of the absorption scheme are not treated in the literature.

\section{Exponential integrability}\label{sec:integrability}

The goal of this paper is to establish the exponential integrability of functionals of the CIR process and its approximations. To this end, let $(\hspace{.5pt}\oversymb{\hspace{.5pt}Y}_{\hspace{-2.5pt}t})_{t\in[0,T]}$ be the piecewise constant time-continuous approximation of $(y_{t})_{t\in[0,T]}$ corresponding to one of the discretization schemes from \eqref{eq2.1} -- \eqref{eq2.5} or \eqref{eq2.9} and, for some $\lambda, \mu \in \RR$, define
\begin{align}
\Theta_{t} &\equiv \exp\bigg\{\lambda\int_{0}^{t}{y_{u}\hspace{1pt}du} + \mu\int_{0}^{t}{\sqrt{y_{u}}\,dW_{u}}\bigg\},\hspace{2pt} \forall\hspace{.5pt}t\in[0,T],\label{eq3.1} \\[4pt]
\hspace{1pt}\oversymb{\hspace{-1pt}\Theta\hspace{-1pt}}\hspace{1pt}_{t} &\equiv \exp\bigg\{\lambda\int_{0}^{t}{\hspace{-.5pt}\oversymb{\hspace{.5pt}Y}_{\hspace{-2.5pt}u}\hspace{1pt}du} + \mu\int_{0}^{t}{\sqrt{\hspace{-.5pt}\oversymb{\hspace{.5pt}Y}_{\hspace{-2.5pt}u}}\,dW_{u}}\bigg\},\hspace{2pt} \forall\hspace{.5pt}t\in[0,T],\label{eq3.2}
\end{align}
and
\begin{equation}\label{eq3.2'}
\Delta \equiv \lambda + \frac{1}{2}\hspace{1pt}\mu^{2}.
\end{equation}

\begin{lemma}\label{Lem3.1}
Independent of the discretization scheme $\hspace{-.5pt}\oversymb{\hspace{.5pt}Y}$ employed,
\begin{align}\label{eq3.3}
\sup_{t \in [0,T]}\E\big[\hspace{1.5pt}\oversymb{\hspace{-1pt}\Theta\hspace{-1pt}}\hspace{1pt}_{t}\big] =
\begin{dcases}
	\hspace{12pt} 1 &\text{if } \Delta \leq 0\hspace{.5pt}, \\[1pt]
	\hspace{2pt} \E\big[\hspace{1.5pt}\oversymb{\hspace{-1pt}\Theta\hspace{-1pt}}\hspace{1pt}_{T}\big] &\text{if } \Delta>0\hspace{.5pt}.
\end{dcases}
\end{align}
\end{lemma}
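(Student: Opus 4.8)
The plan is to exploit the fact that the interpolant $\oversymb{Y}$ is constant on each grid interval, which makes the conditional expectations of $\oversymb{\Theta}$ needed here computable in closed form via the Gaussian moment generating function, with no appeal to a Novikov- or Kazamaki-type criterion for the stochastic exponential involved.

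First I would record two structural features shared by all the schemes \eqref{eq2.1}--\eqref{eq2.5} and \eqref{eq2.9}: one always has $\oversymb{Y}_t\ge 0$ for every $t\in[0,T]$, and $\oversymb{Y}_{t_n}<\infty$ almost surely for each $n$, since $\oversymb{Y}_{t_n}$ is in each case an explicit finite-valued function of the previous step and of $\delta W_{t_{n-1}}$. Hence on $[t_n,t_{n+1})$ the integrand $\oversymb{Y}_u$ equals the constant $\mathcal{F}_{t_n}$-measurable random variable $\oversymb{Y}_{t_n}$, and $\oversymb{\Theta}_{t_n}$ is itself $\mathcal{F}_{t_n}$-measurable because both $\int_0^{t_n}\oversymb{Y}_u\,du$ and $\int_0^{t_n}\sqrt{\oversymb{Y}_u}\,dW_u$ reduce to finite sums of $\mathcal{F}_{t_n}$-measurable terms.

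The core step is a closed-form conditional expectation: for $t\in[t_n,t_{n+1}]$,
\[
\oversymb{\Theta}_t = \oversymb{\Theta}_{t_n}\exp\Big\{\lambda\,\oversymb{Y}_{t_n}(t-t_n)+\mu\sqrt{\oversymb{Y}_{t_n}}\,(W_t-W_{t_n})\Big\},
\]
and since $W_t-W_{t_n}$ is independent of $\mathcal{F}_{t_n}$ while $\oversymb{Y}_{t_n}$ and $\oversymb{\Theta}_{t_n}$ are nonnegative and $\mathcal{F}_{t_n}$-measurable, the Gaussian moment generating function gives
\[
\E\big[\oversymb{\Theta}_t \mid \mathcal{F}_{t_n}\big] = \oversymb{\Theta}_{t_n}\exp\Big\{\big(\lambda+\tfrac{1}{2}\mu^2\big)\,\oversymb{Y}_{t_n}(t-t_n)\Big\} = \oversymb{\Theta}_{t_n}\,e^{\Delta\,\oversymb{Y}_{t_n}(t-t_n)},
\]
an identity valid in $[0,\infty]$ regardless of integrability. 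If $\Delta\le 0$, then $\oversymb{Y}_{t_n}\ge 0$ makes the exponential factor at most $1$, so $\E[\oversymb{\Theta}_t\mid\mathcal{F}_{t_n}]\le\oversymb{\Theta}_{t_n}$; taking expectations and iterating this bound down the grid from $t_n$ to $t_0=0$ yields $\E[\oversymb{\Theta}_t]\le\E[\oversymb{\Theta}_0]=1$ for every $t$, and since $\oversymb{\Theta}_0=1$ the supremum equals $1$. If $\Delta>0$, then as $t$ ranges over any $[t_n,t_{n+1}]$ the integrand $\oversymb{\Theta}_{t_n}\,e^{\Delta\,\oversymb{Y}_{t_n}(t-t_n)}$ is nondecreasing pointwise, so by monotonicity of the expectation $t\mapsto\E[\oversymb{\Theta}_t]$ is nondecreasing on each closed subinterval; as the subintervals cover $[0,T]$ and overlap at the grid points, this map is nondecreasing on all of $[0,T]$, whence $\sup_{t\in[0,T]}\E[\oversymb{\Theta}_t]=\E[\oversymb{\Theta}_T]$ (trivially so if the right-hand side is $+\infty$).

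There is no serious obstacle; the points needing care are that $\oversymb{\Theta}_t$ need not be integrable, so the whole argument should be run with conditional expectations of nonnegative random variables valued in $[0,\infty]$, and that it is precisely the piecewise-constant structure of $\oversymb{Y}$ which sharpens the usual supermartingale inequality for stochastic exponentials into the identity displayed above --- which is exactly what produces the dichotomy in the sign of $\Delta$.
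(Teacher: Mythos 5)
Your proposal is correct and follows essentially the same route as the paper: the closed-form conditional expectation $\E[\hspace{1pt}\oversymb{\hspace{-1pt}\Theta\hspace{-1pt}}\hspace{1pt}_{t}\mid\mathcal{G}_{t_{n}}]=\hspace{1pt}\oversymb{\hspace{-1pt}\Theta\hspace{-1pt}}\hspace{1pt}_{t_{n}}e^{(t-t_{n})\Delta\oversymb{\hspace{.5pt}Y}_{\hspace{-2.5pt}t_{n}}}$ obtained from the Gaussian moment generating function is exactly the paper's equation \eqref{eq3.4}, and the two cases are then handled by the same monotone iteration of expectations down (resp.\ up) the grid. Your extra care in running the argument with $[0,\infty]$-valued conditional expectations is a refinement of presentation, not a different method.
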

\begin{proof}
Let $\left\{\mathcal{G}_{t},\hspace{1.5pt} 0\leq t\leq T\right\}$ be the natural filtration generated by $W$ and $\E_{t}\!\big[\hspace{.5pt}\cdot\hspace{.5pt}\big]\equiv\E\big[\hspace{.5pt}\cdot\hspace{.5pt}|\mathcal{G}_{t}\big]$. Assuming that $t \in [t_{n}, t_{n+1}]$ and conditioning on the $\sigma$-algebra $\mathcal{G}_{t_{n}}$, we find that
\begin{equation}\label{eq3.4}
\E_{t_{n}}\!\big[\hspace{1pt}\oversymb{\hspace{-1pt}\Theta\hspace{-1pt}}\hspace{1pt}_{t}\big] = \exp\bigg\{\lambda\int_{0}^{t_{n}}{\hspace{-.5pt}\oversymb{\hspace{.5pt}Y}_{\hspace{-2.5pt}u}\hspace{1pt}du} + \mu\int_{0}^{t_{n}}{\hspace{-2pt}\sqrt{\hspace{-.5pt}\oversymb{\hspace{.5pt}Y}_{\hspace{-2.5pt}u}}\,dW_{u}}\bigg\}\exp\Big\{(t-t_{n})\Delta\hspace{.5pt}\oversymb{\hspace{.5pt}Y}_{\hspace{-2.5pt}t_{n}}\Big\}.
\end{equation}
Since $\hspace{-.5pt}\oversymb{\hspace{.5pt}Y}$ is always non-negative, if $\Delta\leq0$, then $\E_{t_{n}}\!\big[\hspace{1pt}\oversymb{\hspace{-1pt}\Theta\hspace{-1pt}}\hspace{1pt}_{t}\big] \leq \E_{t_{n}}\!\big[\hspace{1pt}\oversymb{\hspace{-1pt}\Theta\hspace{-1pt}}\hspace{1pt}_{t_{n}}\big]$. From the law of iterated expectations,
\begin{equation}\label{eq3.5}
\E\big[\hspace{1pt}\oversymb{\hspace{-1pt}\Theta\hspace{-1pt}}\hspace{1pt}_{t}\big] \leq \E\big[\hspace{1pt}\oversymb{\hspace{-1pt}\Theta\hspace{-1pt}}\hspace{1pt}_{t_{n}}\big] \leq \E\big[\hspace{1pt}\oversymb{\hspace{-1pt}\Theta\hspace{-1pt}}\hspace{1pt}_{t_{n-1}}\big] \leq \hdots \leq \E\big[\hspace{1pt}\oversymb{\hspace{-1pt}\Theta\hspace{-1pt}}\hspace{1pt}_{0}\big].
\end{equation}
On the other hand, if $\Delta>0$, then $\E_{t_{n}}\!\big[\hspace{1pt}\oversymb{\hspace{-1pt}\Theta\hspace{-1pt}}\hspace{1pt}_{t}\big] \leq \E_{t_{n}}\!\big[\hspace{1pt}\oversymb{\hspace{-1pt}\Theta\hspace{-1pt}}\hspace{1pt}_{t_{n+1}}\big]$, so
\begin{equation}\label{eq3.6}
\E\big[\hspace{1pt}\oversymb{\hspace{-1pt}\Theta\hspace{-1pt}}\hspace{1pt}_{t}\big] \leq \E\big[\hspace{1pt}\oversymb{\hspace{-1pt}\Theta\hspace{-1pt}}\hspace{1pt}_{t_{n+1}}\big] \leq \E\big[\hspace{1pt}\oversymb{\hspace{-1pt}\Theta\hspace{-1pt}}\hspace{1pt}_{t_{n+2}}\big] \leq \hdots \leq \E\big[\hspace{1pt}\oversymb{\hspace{-1pt}\Theta\hspace{-1pt}}\hspace{1pt}_{T}\big],
\end{equation}
which concludes the proof.
\end{proof}

\subsection{The Cox-Ingersoll-Ross process}\label{subsec:CIR}

The exponential integrability result below is an extension of Proposition 3.1 in \citet{Andersen:2007} to the case $\Delta\leq0$. When $\Delta>0$, we can derive Proposition \ref{Prop3.2} directly from \citet{Andersen:2007}, by writing the moments of the Heston model in terms of $\E\!\big[\Theta_{T}\big]$. Our proof takes a fairly different approach and is included for completeness.

\begin{proposition}\label{Prop3.2}
The first moment of the exponential functional of the CIR process defined in \eqref{eq3.1} is uniformly bounded up to a critical time $T^{*}$, i.e.,
\begin{equation}\label{eq3.7}
\sup_{t\in[0,T]}\E\big[\Theta_{t}\big]<\infty,\hspace{6pt} \forall\hspace{1pt}T<T^{*},
\end{equation}
and
\begin{equation}\label{eq3.7'}
\E\big[\Theta_{T}\big]=\infty,\hspace{6pt} \forall\hspace{1pt}T\geq T^{*}.
\end{equation}
If $\Delta\leq0$, then $T^{*}=\infty$, whereas if $\Delta>0$, then $T^{*}$ is given below:
\begin{enumerate}
\item{When $k_{y}<\xi_{y}(\mu-\sqrt{2\Delta})$,
\begin{equation}\label{eq3.8}
T^{*} = \frac{1}{\nu}\log\left(\frac{\mu\xi_{y}-k_{y}+\nu}{\mu\xi_{y}-k_{y}-\nu}\right),\hspace{6pt} \nu = \sqrt{(\mu\xi_{y}-k_{y})^{2}-2\xi_{y}^{2}\Delta}\hspace{1pt}.
\end{equation}}
\item{When $k_{y}=\xi_{y}(\mu-\sqrt{2\Delta})$,
\begin{equation}\label{eq3.9}
T^{*} = \frac{2}{\mu\xi_{y}-k_{y}}\hspace{1pt}.
\end{equation}}
\item{When $\xi_{y}(\mu-\sqrt{2\Delta})<k_{y}<\xi_{y}(\mu+\sqrt{2\Delta})$,
\begin{equation}\label{eq3.10}
T^{*} = \frac{2}{\hat{\nu}}\left[\frac{\pi}{2}-\arctan\left(\frac{\mu\xi_{y}-k_{y}}{\hat{\nu}}\right)\right],\hspace{6pt} \hat{\nu}=\sqrt{2\xi_{y}^{2}\Delta-(\mu\xi_{y}-k_{y})^{2}}\hspace{1pt}.
\end{equation}}
\item{When $k_{y}\geq\xi_{y}(\mu+\sqrt{2\Delta})$,
\begin{equation}\label{eq3.11}
T^{*} = \infty\hspace{.5pt}.
\end{equation}}
\end{enumerate}
\end{proposition}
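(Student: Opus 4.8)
The plan is to find an explicit exponential supermartingale (or a precise martingale on a maximal interval) built from $\Theta_t$, using an affine ansatz in $y_t$. Concretely, I would look for a deterministic function $\phi:[0,T^*)\to\RR$ solving a Riccati-type ODE such that
\[
M_t \;\equiv\; \exp\bigg\{\lambda\int_0^t y_u\,du + \mu\int_0^t\sqrt{y_u}\,dW_u + \phi(t)\,y_t\bigg\}
\]
is a local martingale. Applying It\^o's formula to $M_t$ and using $dy_t = k_y(\theta_y - y_t)\,dt + \xi_y\sqrt{y_t}\,dW_t$, the drift terms in $d\log M_t$ collect as
\[
\lambda y_t + \phi'(t)\,y_t + \phi(t)k_y(\theta_y - y_t) + \tfrac12\big(\mu + \phi(t)\xi_y\big)^2 y_t,
\]
and setting the coefficient of $y_t$ to zero forces $\phi$ to satisfy the scalar Riccati ODE
\[
\phi'(t) = -\lambda - \mu\xi_y\phi(t) - \tfrac12\xi_y^2\phi(t)^2 + k_y\phi(t),
\]
equivalently $\phi' = -\tfrac12\xi_y^2\phi^2 + (k_y - \mu\xi_y)\phi - \lambda$, with the natural initial condition $\phi(0)=0$. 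The quadratic on the right has discriminant governed by $(\mu\xi_y - k_y)^2 - 2\xi_y^2\Delta$ (note $\lambda = \Delta - \tfrac12\mu^2$, so $-\lambda + \tfrac12\mu^2 = -\Delta + \mu^2$ — I would double-check the bookkeeping), and the three parameter regimes in \eqref{eq3.8}–\eqref{eq3.10} correspond exactly to the two-real-roots, double-root, and complex-roots cases of this Riccati equation. The blow-up time of $\phi$ (the first $t$ at which $\phi(t)\to+\infty$) is precisely $T^*$: in the hyperbolic case one gets a $\log$ of a ratio of root-shifted quantities, in the parabolic case a $1/(\cdot)$, and in the trigonometric case an $\arctan$ expression, matching the stated formulas. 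When $k_y \geq \xi_y(\mu + \sqrt{2\Delta})$ (or $\Delta\le 0$), the relevant root is attracting and $\phi$ stays bounded for all time, giving $T^*=\infty$.

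Having constructed $\phi$ on $[0,T^*)$, the next step is the two inequalities. For the upper bound \eqref{eq3.7}, since $M$ is a nonnegative local martingale it is a supermartingale, so $\E[M_t]\le M_0 = 1$; as $\phi(t)\ge 0$ on $[0,T^*)$ (which I would verify from the ODE, using $\phi(0)=0$ and the sign of $-\lambda$ near $0$, handling the $\lambda<0$ subtlety separately) and $y_t\ge0$, we get $\E[\Theta_t] = \E[M_t\exp\{-\phi(t)y_t\}] \le \E[M_t] \le 1$ when $\phi(t)\ge0$; more carefully, $\sup_{t\le T}\E[\Theta_t]<\infty$ follows because $\phi$ is bounded on the compact $[0,T]\subset[0,T^*)$, so $\exp\{-\phi(t)y_t\}$ is bounded above by a constant times something integrable, and one combines this with $\E[M_t]\le1$. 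Actually the cleanest route is: on $[0,T]$ with $T<T^*$, $\phi$ is bounded, and a standard localization/Fatou argument upgrades the local martingale $M$ to a true martingale (using that $\E\big[\exp\{\epsilon\int_0^T y_u\,du\}\big]<\infty$ for small $\epsilon$, itself a consequence of the affine structure), whence $\E[M_t]=1$ and $\E[\Theta_t]=\E[\exp\{-\phi(t)y_t\}]<\infty$ with a bound uniform in $t\le T$ since $\phi\ge0$.

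For the explosion \eqref{eq3.7'}, I would argue that at $t=T^*$ the Riccati solution blows up, and this forces $\E[\Theta_{T^*}]=\infty$: heuristically, $\E[\Theta_T]$ "equals" $\E[\exp\{-\phi(T)y_T\}]\cdot(\text{normalization})$ only while $\phi(T)$ is finite, and as $T\uparrow T^*$ the would-be density degenerates. More rigorously, one uses the known closed form for the joint Laplace/exponential transform of $\big(\int_0^T y_u\,du,\ y_T\big)$ for the CIR (or noncentral chi-square) process — an explicit affine formula whose denominator vanishes exactly at $T^*$ — to read off both that the transform is finite for $T<T^*$ and that it diverges for $T\ge T^*$; monotonicity of $T\mapsto\E[\Theta_T]$ (which need not hold directly, but $T\mapsto\int_0^T y_u\,du$ is increasing so for $\lambda>0$ it does, and the $\mu$ term can be absorbed by a Girsanov change of measure that shifts $k_y$ and reduces to the $\mu=0$ case) then gives \eqref{eq3.7'} for all $T\ge T^*$. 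The main obstacle is the rigorous justification that the local martingale $M$ is a genuine martingale up to any $T<T^*$ and the sharp identification that the transform is $+\infty$ exactly from $T^*$ onward, rather than merely $\ge T^*$ or $>T^*$; this is where the Girsanov reduction to $\mu=0$ plus the explicit CIR Laplace transform (with its pole at $T^*$) does the decisive work, and where care is needed to distinguish the boundary case $T=T^*$ from $T>T^*$.
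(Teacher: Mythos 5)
Your overall strategy --- an affine ansatz in $y_t$ leading to a scalar Riccati equation whose finite-time blow-up determines $T^*$ --- is exactly the paper's (it writes $\Theta_t$ in terms of $y_t$ and $\int_0^t y_u\,du$ via the SDE, then solves the Feynman--Kac PDE for $F(\tau,y)$ with an exponential-affine ansatz, obtaining $\partial_\tau G = \tfrac12\xi_y^2G^2+(\mu\xi_y-k_y)G+\Delta$, $G(0)=0$). But your implementation has a genuine gap: the boundary condition for the Riccati function must be \emph{terminal}, $\phi(T)=0$, not initial, $\phi(0)=0$. With $\phi(0)=0$ and the martingale identity $\E[M_T]=M_0=1$ you only learn about $\E\big[\Theta_T e^{\phi(T)y_T}\big]$ with $\phi(T)\neq 0$, which does not control $\E[\Theta_T]$. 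Worse, the sign claim $\phi\ge 0$ fails precisely in the interesting regime: the correct constant term of your ODE is $-\Delta$ (you dropped the $\tfrac12\mu^2$ from the quadratic variation, as you half-suspected), so $\phi'(0)=-\Delta<0$ and, the quadratic being downward-opening with $q(0)<0$, $\phi$ decreases monotonically to $-\infty$. Hence $e^{\phi(t)y_t}\le 1$ and the domination $\E[\Theta_t]\le\E[M_t]$ goes the wrong way. Finally, the forward blow-up time is $\int_0^\infty\big(\tfrac12\xi_y^2u^2+(k_y-\mu\xi_y)u+\Delta\big)^{-1}du$, whereas the stated $T^*$ is $\int_0^\infty\big(\tfrac12\xi_y^2u^2+(\mu\xi_y-k_y)u+\Delta\big)^{-1}du$; the linear coefficient has the opposite sign, so your construction would not even reproduce formulas \eqref{eq3.8}--\eqref{eq3.10} except when $k_y=\mu\xi_y$. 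Imposing $\phi(T)=0$ and setting $G(\tau)=\phi(T-\tau)$ flips the ODE to the paper's $G'=f(G)$, $G(0)=0$, with $G\ge 0$ blowing up to $+\infty$ at $\tau=T^*$; this is exactly the change of variables $\tau=T-t$ the paper performs.

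Two further remarks. Your fallback in the last paragraph --- Girsanov to remove the $\mu\int\sqrt{y}\,dW$ term (shifting $k_y\mapsto k_y-\mu\xi_y$ and reducing to $\E[\exp\{\Delta\int_0^T\tilde y_u\,du\}]$), then the explicit CIR joint transform with its pole at $T^*$ --- is sound and is essentially how the paper proceeds (via \eqref{eq3.12}--\eqref{eq3.13} and the cited Riccati blow-up result of Liberty--Mou); it also yields the monotonicity in $T$ needed for \eqref{eq3.7'}. But you present it as heuristic support rather than as the proof, and the rigorous ingredients you correctly identify as delicate (true-martingale property of the Girsanov density, sharpness of the explosion exactly at $T^*$ rather than after it) are left unaddressed. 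As written, the proposal does not constitute a proof; the backward-in-time correction and the transfer of the blow-up analysis to the paper's $G$ are needed to repair it.
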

\begin{proof}
Since
\begin{equation}\label{eq3.12}
\int_{0}^{t}{\sqrt{y_{u}}\,dW_{u}} = \frac{1}{\xi_{y}}\hspace{1pt}y_{t} - \frac{1}{\xi_{y}}\big(y_{0}+k_{y}\theta_{y}t\big) + \frac{k_{y}}{\xi_{y}}\int_{0}^{t}{y_{u}\hspace{1pt}du}
\end{equation}
from \eqref{eq1.1},
\begin{equation}\label{eq3.13}
\Theta_{t} = \exp\left\{-\big(y_{0}+k_{y}\theta_{y}t\big)\hat{\lambda}\right\}\exp\bigg\{\hat{\lambda}y_{t} + \hat{\mu}\int_{0}^{t}{y_{u}\hspace{1pt}du}\bigg\},
\end{equation}
where
\begin{equation}\label{eq3.14}
\hat{\lambda} = \frac{\mu}{\xi_{y}}\hspace{1pt},\hspace{6pt} \hat{\mu} = \lambda + \frac{\mu k_{y}}{\xi_{y}}\hspace{1pt}.
\end{equation}
Since the first exponential on the right-hand side of \eqref{eq3.13} is a continuous function of time and since $(y_{t})_{t\in[0,T]}$ is a stationary Markov process, it suffices to prove the finiteness of the supremum over $\tau$ of
\begin{align}
F(\tau,y) &\equiv \E_{t}\!\left[\exp\bigg\{\hat{\lambda}y_{T} + \hat{\mu}\int_{t}^{T}{y_{u}\hspace{1pt}du}\bigg\}\hspace{1pt}\Big|\hspace{2pt}y_{t}=y\right]\label{eq3.15} \\[4pt]
&= \E\left[\exp\bigg\{\hat{\lambda}y_{\tau} + \hat{\mu}\int_{0}^{\tau}{y_{u}\hspace{1pt}du}\bigg\}\hspace{1pt}\Big|\hspace{2pt}y_{0}=y\right],\label{eq3.16}
\end{align}
where $\tau=T-t$ and $F(0,y)=\exp\{\hat{\lambda}y\}$. The process $M=(M_{t})_{t\in[0,T]}$ defined by
\begin{equation}\label{eq3.17}
M_{t} \equiv F(\tau,y)\exp\bigg\{\hat{\mu}\int_{0}^{t}{y_{u}\hspace{1pt}du}\bigg\}
\end{equation}
is a martingale. Applying It\^o's formula to the right-hand side and setting the resulting drift term equal to zero, we find a PDE for $F(\tau,y)$:
\begin{equation}\label{eq3.18}
-\partial_{\tau}F + k_{y}(\theta_{y}-y)\partial_{y}F + \frac{1}{2}\hspace{1pt}\xi_{y}^{2}y\partial_{yy}F + \hat{\mu}yF = 0.
\end{equation}
Suppose that the solution to the PDE is of the form
\begin{equation}\label{eq3.19}
F(\tau,y) = \exp\big\{\hat{\lambda}y + k_{y}\theta_{y}\hat{\lambda}\hspace{.5pt}\tau + G(\tau)y + H(\tau)\big\},
\end{equation}
with
\begin{equation}\label{eq3.20}
G(0)=0 \hspace{3pt}\text{ and }\hspace{3pt} H(0)=0.
\end{equation}
Substituting back into \eqref{eq3.18} with \eqref{eq3.19} results in a system of Riccati ODEs:
\begin{align}
\partial_{\tau}G(\tau) &= a\hspace{.5pt}G(\tau)^{2} + b\hspace{.5pt}G(\tau) + c,\label{eq3.21} \\[1pt]
\partial_{\tau}H(\tau) &= k_{y}\theta_{y}G(\tau),\label{eq3.22}
\end{align}
where
\begin{equation}\label{eq3.23}
a=\frac{1}{2}\hspace{1pt}\xi_{y}^{2},\hspace{5pt}
b=\hat{\lambda}\xi_{y}^{2}-k_{y}=\mu\xi_{y}-k_{y} \hspace{3pt}\text{ and }\hspace{3pt}
c=\frac{1}{2}\hspace{1pt}\hat{\lambda}^{2}\xi_{y}^{2}+\hat{\mu}-\hat{\lambda}k_{y}=\Delta.
\end{equation}
The conditions under which $G(\tau)$ -- and so $F(\tau,y)$ -- blows up are connected to the position of the roots of the polynomial $f(x)=ax^{2}+bx+c$, i.e.,
\begin{equation}\label{eq3.24}
x_{1,2} = -\frac{1}{\xi_{y}^{2}}(\mu\xi_{y}-k_{y}) \pm \frac{1}{\xi_{y}^{2}}\sqrt{(\mu\xi_{y}-k_{y})^{2}-2\xi_{y}^{2}\Delta}\hspace{1pt}.
\end{equation}
First of all, if $\Delta=0$, then zero is a root of the polynomial. Since $f(x)$ is locally Lipschitz, we conclude that the ODE \eqref{eq3.21} with the initial condition $G(0)=0$ has a unique solution, namely $G(\tau)=0$, $\forall\hspace{.5pt}\tau\geq0$. From \eqref{eq3.22}, we find that $H(\tau)=0$, $\forall\hspace{.5pt}\tau\geq0$. Substituting back into \eqref{eq3.19} and making use of \eqref{eq3.16}, we deduce that $\E\big[\Theta_{t}\big]=1$, $\forall\hspace{.5pt}t\geq0$. Alternatively, note that $(\Theta_{t})_{t\in[0,T]}$ is a Dol\'eans exponential and a true martingale \citep[see, for instance,][]{Cheridito:2007}. Therefore, when $\Delta=0$, $\sup_{t\in[0,T]}\E\big[\Theta_{t}\big]=1$ for all $T\geq0$.

Second, if $\Delta\neq0$, employing Proposition 3.3 in \citet{Liberty:2011}, one can easily show the finiteness of $G(\tau)$, and hence of $H(\tau)$ and $F(\tau,y)$, for all $\tau<T^{*}$, and the explosion of $G(\tau)$, for all $\tau\geq T^{*}$. Therefore, $F(\tau,y)$ is continuous and finite on $[0,T]$, for all $T<T^{*}$, so its supremum over the time interval is finite by the boundedness theorem, which concludes the proof.
\end{proof}

\subsection{The drift-implicit Euler (BEM) scheme}\label{subsec:BEM}

Suppose that $2k_{y}\theta_{y}>\xi_{y}^{2}$ and let $\hspace{-.5pt}\oversymb{\hspace{.5pt}Y}_{\hspace{-2.5pt}t}=\tilde{x}_{t_{n}}^{2}$, $\forall\hspace{1pt}t\in[t_{n},t_{n+1})$, with $\tilde{x}$ defined in \eqref{eq2.9}. In order to derive the exponential integrability of the BEM scheme, we first prove an auxiliary result.

\begin{lemma}\label{Lem3.3}
Suppose that $\Delta>0$. Then we can find $\eta>1$ such that for all $\omega\in[0,1]$,
\begin{equation}\label{eq3.25}
2\eta^{2}\omega^{2}\Delta\gamma^{2}T^{2} + 2\eta\hspace{.5pt}\omega\gamma\mu\hspace{.5pt}T - \eta + 1 < 0,
\end{equation}
if and only if $T<T^{*}$, where $T^{*}$ is given below:
\begin{enumerate}
\item{When $\mu<0$ and $\lambda<\frac{3}{2}\hspace{1pt}\mu^{2}$,
\begin{equation}\label{eq3.26}
T^{*} = -\frac{2\mu}{\xi_{y}\Delta}\hspace{1pt}.
\end{equation}}
\item{When $\mu<0$ and $\lambda\geq\frac{3}{2}\hspace{1pt}\mu^{2}$, or when $\mu\geq0$,
\begin{equation}\label{eq3.27}
T^{*} = \frac{1}{\xi_{y}(\mu+\sqrt{2\Delta})}\hspace{1pt}.
\end{equation}}
\end{enumerate}
\end{lemma}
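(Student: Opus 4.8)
The plan is to treat the left-hand side of \eqref{eq3.25} as a function of $\eta$ for each fixed $\omega\in[0,1]$, and reduce the claim to a statement about when a family of quadratics in $\eta$ admits a root larger than $1$. Write $P_\omega(\eta) = 2\omega^2\Delta\gamma^2 T^2\,\eta^2 + (2\omega\gamma\mu T - 1)\eta + 1$. Since $\Delta>0$, the coefficient of $\eta^2$ is nonnegative, and it is strictly positive unless $\omega=0$. The condition $P_\omega(\eta)<0$ for some $\eta>1$ is equivalent to: $P_\omega$ has two real roots and at least one of them exceeds $1$. Because $P_\omega(0)=1>0$ and the leading coefficient is nonnegative, if there is a root above $1$ then there must be a second root in $(0,\infty)$ as well, so the relevant regime is $P_\omega(1)<0$ together with positivity of the discriminant; conversely $P_\omega(1)<0$ already forces (for $\omega\ne0$) two real roots straddling $1$. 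So I would first show that the existence of $\eta>1$ with $P_\omega(\eta)<0$ is equivalent to $P_\omega(1)<0$, i.e.\ to
\begin{equation*}
g(\omega) \;\equiv\; 2\omega^2\Delta\gamma^2 T^2 + 2\omega\gamma\mu T \;<\; 0,
\end{equation*}
handling the degenerate case $\omega=0$ (where $P_0(\eta)=1-\eta<0$ for all $\eta>1$, so the condition holds trivially) separately — note $g(0)=0$, so strictness is an issue precisely at the endpoint $\omega=0$; I expect the statement as written should be read as ``for all $\omega\in[0,1]$ \eqref{eq3.25} holds,'' and at $\omega=0$ it holds directly, so the binding constraints come from $\omega\in(0,1]$.

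Next I would analyze $g(\omega) = 2\gamma T\,\omega\,(\omega\Delta\gamma T + \mu)$ on $(0,1]$. Since $\gamma=\xi_y/2>0$ and $\omega>0$, the sign of $g(\omega)$ is the sign of $h(\omega)\equiv \omega\Delta\gamma T + \mu$, which is affine and increasing in $\omega$ (as $\Delta\gamma T>0$). Hence $g(\omega)<0$ for all $\omega\in(0,1]$ if and only if $h(1)<0$, i.e.
\begin{equation*}
\Delta\gamma T + \mu < 0 \qquad\Longleftrightarrow\qquad T < -\frac{\mu}{\Delta\gamma} = -\frac{2\mu}{\xi_y\Delta},
\end{equation*}
which matches \eqref{eq3.26}. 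This requires $\mu<0$ for the bound to be positive. So case (i) of the lemma should correspond exactly to the regime where the ``worst'' $\omega$ is the endpoint $\omega=1$ — but I need to reconcile this with the stated split at $\lambda \lessgtr \tfrac32\mu^2$. The point is that $h$ is increasing, so the maximum of $g$ over $[0,1]$ is at $\omega=1$; thus naively $T^* = -2\mu/(\xi_y\Delta)$ always when $\mu<0$. The discrepancy with \eqref{eq3.27} must come from comparing this quantity with $1/(\xi_y(\mu+\sqrt{2\Delta}))$: I would verify the elementary inequality that $-\tfrac{2\mu}{\xi_y\Delta} < \tfrac{1}{\xi_y(\mu+\sqrt{2\Delta})}$ precisely when $\lambda < \tfrac32\mu^2$ (using $\Delta=\lambda+\tfrac12\mu^2$, so $2\Delta - (\mu+\sqrt{2\Delta})\cdot(-2\mu) = \dots$ reduces to a comparison of $-2\mu$ and $\sqrt{2\Delta}$, equivalently $4\mu^2 \gtrless 2\Delta$, i.e.\ $\tfrac32\mu^2 \gtrless \lambda$), and that the two expressions coincide at equality. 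This means the true explosion time in the lemma is the \emph{larger} of the $\omega$-endpoint bounds — reflecting that the lemma will later be applied with an additional constraint forcing $\omega$ into a subinterval, or that one also needs the companion inequality $P_\omega(\eta)<0$ to survive a limiting $\omega\to 1^-$; I would pin down which of these is in play by re-reading how $\eta$ and $\omega$ enter the eventual application, and state the equivalence accordingly.

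For case (ii), when $\mu\ge0$ the expression $-2\mu/(\xi_y\Delta)$ is nonpositive, so no $T>0$ makes $g(1)<0$; yet the claimed $T^* = 1/(\xi_y(\mu+\sqrt{2\Delta}))>0$ is positive, which tells me that in this regime the relevant condition is \emph{not} $P_\omega(1)<0$ but rather the weaker discriminant condition — i.e.\ $P_\omega$ need only have a real root exceeding $1$, and when $P_\omega(1)\ge0$ this forces both roots to lie in $(1,\infty)$, requiring the vertex abscissa $>1$ and the discriminant $\ge0$. So the correct reduction is: there exists $\eta>1$ with $P_\omega(\eta)<0$ iff either $P_\omega(1)<0$, or ($P_\omega$ has real roots and its vertex lies to the right of $1$). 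I would work out the vertex condition $-(2\omega\gamma\mu T-1)/(4\omega^2\Delta\gamma^2T^2) > 1$ and the discriminant condition $(2\omega\gamma\mu T - 1)^2 \ge 8\omega^2\Delta\gamma^2T^2$, and check that the binding $\omega$ (again an endpoint, by monotonicity in $\omega$ of the relevant quantities) yields exactly \eqref{eq3.27}. Concretely the discriminant condition at the extremal $\omega$ should simplify, via $\Delta = \lambda + \tfrac12\mu^2$, to $1 \ge 2\omega\gamma T(\mu+\sqrt{2\Delta})$ (completing the square: $(2\omega\gamma\mu T-1)^2 - 8\omega^2\Delta\gamma^2T^2 = (1-2\omega\gamma T(\mu+\sqrt{2\Delta}))(1-2\omega\gamma T(\mu-\sqrt{2\Delta}))$), giving $T \le 1/(2\omega\gamma(\mu+\sqrt{2\Delta})) = 1/(\omega\xi_y(\mu+\sqrt{2\Delta}))$, minimized over $\omega\in(0,1]$ at $\omega=1$. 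The main obstacle I anticipate is the bookkeeping in case (i): correctly identifying why the threshold jumps from $-2\mu/(\xi_y\Delta)$ to $1/(\xi_y(\mu+\sqrt{2\Delta}))$ at $\lambda=\tfrac32\mu^2$, which I believe is precisely the point where the $P_\omega(1)<0$ branch and the ``two roots above $1$'' branch swap which one is active for the extremal $\omega$ — so the clean statement is $T^* = \max\{-2\mu/(\xi_y\Delta),\,1/(\xi_y(\mu+\sqrt{2\Delta}))\}$ when $\mu<0$ and $T^* = 1/(\xi_y(\mu+\sqrt{2\Delta}))$ when $\mu\ge0$, and the lemma's case split is just the resolution of that maximum. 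Assembling these equivalences and the two elementary inequalities (the $\omega$-monotonicity and the $\lambda\lessgtr\tfrac32\mu^2$ comparison) completes the argument.
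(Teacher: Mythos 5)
Your endgame is essentially the paper's: the reduction to a root\hspace{1pt}-location problem for a quadratic, the factorization of the discriminant as $\bigl(1-\xi_{y}T(\mu+\sqrt{2\Delta})\bigr)\bigl(1-\xi_{y}T(\mu-\sqrt{2\Delta})\bigr)$, and the identification of $T^{*}$ as $\max\bigl\{-2\mu/(\xi_{y}\Delta),\,1/(\xi_{y}(\mu+\sqrt{2\Delta}))\bigr\}$ for $\mu<0$, with the switch at $\lambda=\tfrac{3}{2}\mu^{2}$ coming from comparing $-2\mu$ with $\sqrt{2\Delta}$, are all correct and match \eqref{eq3.31}--\eqref{eq3.33}. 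However, the proof as structured has a genuine gap in the sufficiency direction, caused by the order in which you eliminate the two variables. The lemma asserts $\exists\,\eta>1\ \forall\,\omega\in[0,1]$; you instead analyse, for each fixed $\omega$, whether some $\eta_{\omega}>1$ makes $P_{\omega}(\eta_{\omega})<0$, and then take the worst $\omega$. That establishes only $\forall\,\omega\ \exists\,\eta_{\omega}$, which does not by itself produce a single $\eta$ valid for every $\omega$ --- and your closing speculation that the lemma must later be applied ``with an additional constraint forcing $\omega$ into a subinterval'' is a symptom of this unresolved quantifier order (it is not how the lemma is used: in Proposition~\ref{Prop3.4} one fixed $\eta$ must serve every $\omega=m/N$ simultaneously).

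The missing step is exactly the paper's first move: fix $\eta>1$ and view the left-hand side of \eqref{eq3.25} as the quadratic $f_{\eta}(\omega)$ in $\omega$ of \eqref{eq3.28}. Since $f_{\eta}(0)=1-\eta<0$ and the leading coefficient is positive, its two real roots straddle $0$, so $f_{\eta}<0$ on all of $[0,1]$ if and only if $f_{\eta}(1)<0$. Equivalently, the sets $E_{\omega}=\{\eta>1:P_{\omega}(\eta)<0\}$ are nested with $E_{1}\subseteq E_{\omega}$ for every $\omega\in[0,1]$, so $\bigcap_{\omega\in[0,1]}E_{\omega}=E_{1}$ and the whole problem collapses to the single inequality \eqref{eq3.30} at $\omega=1$; after that, your root-location analysis (the disjunction ``$P_{1}(1)<0$, or discriminant positive with both roots above $1$'') and your two elementary comparisons finish the case split exactly as in the paper. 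Two smaller points: your initial claim that ``$\exists\,\eta>1$ with $P_{\omega}(\eta)<0$ iff $P_{\omega}(1)<0$'' is false as stated and contradicts your own later correction, so only the corrected disjunction should survive into a final write-up; and the discriminant condition must be strict, since a double root gives $P_{\omega}\geq0$ everywhere while \eqref{eq3.25} is strict --- this is precisely what makes $T=T^{*}$ fail and the threshold in the lemma an open one.
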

\begin{proof}
Fix any $\eta>1$ and define the polynomial
\begin{equation}\label{eq3.28}
f_{\eta}(\omega) = 2\omega^{2}\eta^{2}\Delta\gamma^{2}T^{2} + 2\omega\eta\gamma\mu\hspace{.5pt}T - (\eta-1),
\end{equation}
with two distinct real roots
\begin{equation}\label{eq3.29}
\omega_{1,2} = \frac{-\mu\pm\sqrt{\mu^{2}+2(\eta-1)\Delta}}{2\eta\Delta\gamma\hspace{.5pt}T}\hspace{1pt}.
\end{equation}
Since $\omega_{1}>0>\omega_{2}$, we deduce that $f_{\eta}([0,1])<0$ if and only if $f_{\eta}(1)<0$, i.e.,
\begin{equation}\label{eq3.30}
2\eta^{2}\Delta\gamma^{2}T^{2} - \eta(1-2\gamma\mu\hspace{.5pt}T) + 1 < 0.
\end{equation}
However, \eqref{eq3.30} holds for some $\eta>1$ if and only if the second-order polynomial in $\eta$ on the left-hand side has two distinct real roots, one of which is greater than one. Substituting~back into \eqref{eq3.30} with $\gamma=0.5\hspace{.5pt}\xi_{y}$ from \eqref{eq2.7}, we find the necessary and sufficient conditions:
\begin{equation}\label{eq3.31}
(1-\mu\xi_{y}T)^{2} > 2\Delta\xi_{y}^{2}T^{2} \hspace{4pt}\text{ and }\hspace{4pt} 1-\mu\xi_{y}T + \sqrt{(1-\mu\xi_{y}T)^{2}-2\Delta\xi_{y}^{2}T^{2}} > \Delta\xi_{y}^{2}T^{2}.
\end{equation}
Some straightforward calculations lead to an equivalent set of conditions:
\begin{equation}\label{eq3.32}
\xi_{y}T(\mu+\sqrt{2\Delta})<1,
\end{equation}
and
\begin{equation}\label{eq3.33}
2\Delta\xi_{y}T<\sqrt{\mu^{2}+4\Delta}-\mu \hspace{5pt}\text{ or }\hspace{4pt} \sqrt{\mu^{2}+4\Delta}-\mu\leq2\Delta\xi_{y}T<-4\mu.
\end{equation}
Henceforth, the conclusion follows relatively easily.
\end{proof}

\begin{proposition}\label{Prop3.4}
If $\Delta\leq0$ and $T\geq0$ or otherwise, if $\Delta>0$ and $T<T^{*}$, with $T^{*}$ from \eqref{eq3.26} -- \eqref{eq3.27}, then there exists $\delta_{T}>0$ so that for all $\delta t\in(0,\delta_{T})$, the first moment of the exponential functional from \eqref{eq3.2} of the BEM scheme is uniformly bounded, i.e.,
\begin{equation}\label{eq3.34}
\sup_{\delta t\in(0,\delta_{T})}\hspace{1.5pt}\sup_{t\in[0,T]}\E\big[\hspace{1pt}\oversymb{\hspace{-1pt}\Theta\hspace{-1pt}}\hspace{1pt}_{t}\big]<\infty.
\end{equation}
\end{proposition}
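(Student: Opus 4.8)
The plan is to reduce, via Lemma~\ref{Lem3.1}, to a bound on $\E[\oversymb{\Theta}_{T}]$, and then to propagate the exponent backwards across the time grid one step at a time, using the explicit representation \eqref{eq2.9} of the BEM update. If $\Delta\leq0$, Lemma~\ref{Lem3.1} already gives $\sup_{t\in[0,T]}\E[\oversymb{\Theta}_{t}]=1$ for every $T$ and every $\delta t$, so \eqref{eq3.34} holds trivially. Assume henceforth $\Delta>0$; then, again by Lemma~\ref{Lem3.1}, $\sup_{t\in[0,T]}\E[\oversymb{\Theta}_{t}]=\E[\oversymb{\Theta}_{T}]$, so it is enough to bound the latter uniformly in $\delta t$. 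Since $\tilde{x}_{t_{n}}>0$, on each interval $[t_{n},t_{n+1})$ one has $\oversymb{Y}_{u}=\tilde{x}_{t_{n}}^{2}$ and $\sqrt{\oversymb{Y}_{u}}=\tilde{x}_{t_{n}}$, whence $\oversymb{\Theta}_{T}=\exp\{\lambda\sum_{n=0}^{N-1}\tilde{x}_{t_{n}}^{2}\delta t+\mu\sum_{n=0}^{N-1}\tilde{x}_{t_{n}}\delta W_{t_{n}}\}$.

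The key elementary input is a pathwise bound on $\tilde{x}_{t_{n+1}}^{2}$ coming directly from \eqref{eq2.9}: writing $p=1-\beta\delta t>1$ and using $(a+b)^{2}\leq2a^{2}+2b^{2}$ gives $\tilde{x}_{t_{n+1}}^{2}\leq p^{-2}\big(\tilde{x}_{t_{n}}+\gamma\,\delta W_{t_{n}}\big)^{2}+2\alpha\delta t/p$. I would then look for deterministic constants with $c_{N}=d_{N}=0$ and, recursively, $c_{n}\geq0$ and $d_{n}\in\RR$, such that $Z_{n}\equiv\exp\{\mu\sum_{m<n}\tilde{x}_{t_{m}}\delta W_{t_{m}}+\lambda\sum_{m<n}\tilde{x}_{t_{m}}^{2}\delta t+c_{n}\tilde{x}_{t_{n}}^{2}+d_{n}\}$ is a supermartingale for $(\mathcal{G}_{t_{n}})$; note that $Z_{N}=\oversymb{\Theta}_{T}$ while $Z_{0}=\exp\{c_{0}\tilde{x}_{t_{0}}^{2}+d_{0}\}$ is deterministic. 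Because $\tilde{x}_{t_{n}}$ is $\mathcal{G}_{t_{n}}$-measurable and $\delta W_{t_{n}}\sim N(0,\delta t)$ is independent of $\mathcal{G}_{t_{n}}$, the inductive step reduces to the inequality $\exp\{\lambda\tilde{x}_{t_{n}}^{2}\delta t\}\,\E_{t_{n}}\!\big[\exp\{\mu\tilde{x}_{t_{n}}\delta W_{t_{n}}+c_{n+1}\tilde{x}_{t_{n+1}}^{2}\}\big]\leq\exp\{c_{n}\tilde{x}_{t_{n}}^{2}+d_{n}-d_{n+1}\}$. Substituting the pathwise bound, separating $\exp\{\mu\tilde{x}_{t_{n}}\delta W_{t_{n}}\}$ from $\exp\{p^{-2}c_{n+1}(\tilde{x}_{t_{n}}+\gamma\delta W_{t_{n}})^{2}\}$ by H\"older's inequality with a conjugate pair $(\eta,\eta/(\eta-1))$, and evaluating the two resulting Gaussian integrals via $\E[e^{Bg+Ag^{2}}]=(1-2A\delta t)^{-1/2}\exp\{B^{2}\delta t/(2(1-2A\delta t))\}$, valid for $2A\delta t<1$, one is led to a recursion $c_{n}=\mathcal{R}(c_{n+1};\eta,\delta t)$, while the remaining purely deterministic terms define increments $d_{n}-d_{n+1}$ of size $O(\delta t)$, hence summing to $O(T)$, provided $(c_{n})$ stays bounded.

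The crux, and what I expect to be the main obstacle, is to show that for $T<T^{*}$ the exponent $\eta>1$ can be fixed once and for all, independently of $n$ and of $\delta t$, so that uniformly for all sufficiently small $\delta t$ the admissibility conditions $2A\delta t<1$ hold at every step and the backward recursion $c_{N}=0$, $c_{n}=\mathcal{R}(c_{n+1};\eta,\delta t)$ does not explode before reaching $n=0$. The constraint at step $n$ is a quadratic inequality into which the time level enters only through $t_{n}$ and $t_{n}^{2}$; requiring it along the whole grid is exactly requiring the inequality of Lemma~\ref{Lem3.3} at every time fraction $\omega=t_{n}/T\in[0,1]$, and Lemma~\ref{Lem3.3} certifies that such a uniform $\eta$ exists precisely when $T<T^{*}$, with $T^{*}$ as in \eqref{eq3.26}--\eqref{eq3.27}. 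Making this rigorous will require quantifying the discrepancy between the discrete map $\mathcal{R}$ and its formal continuous (Riccati-type) limit so that the discrete blow-up horizon does not fall short of $T^{*}$, and verifying that $\exp\{c_{n}\tilde{x}_{t_{n}}^{2}\}$ is integrable for the $c_{n}$ produced, so that the conditional expectations above are legitimate; both are taken care of within the parameter range where Lemma~\ref{Lem3.3} applies.

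Granting all this, $(c_{n})$ and hence $(d_{n})$ are bounded uniformly over $\delta t\in(0,\delta_{T})$, with $\delta_{T}>0$ small enough that every Gaussian integral above converges; then $(Z_{n})$ is a genuine supermartingale, so $\E[\oversymb{\Theta}_{T}]=\E[Z_{N}]\leq\E[Z_{0}]=\exp\{c_{0}\tilde{x}_{t_{0}}^{2}+d_{0}\}$, which is bounded uniformly in $\delta t$. Together with the reduction of the first paragraph, this gives \eqref{eq3.34}.
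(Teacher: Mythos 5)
Your overall architecture --- reduce to $\E\big[\hspace{1pt}\oversymb{\hspace{-1pt}\Theta\hspace{-1pt}}\hspace{1pt}_{T}\big]$ via Lemma~\ref{Lem3.1}, propagate a quadratic-in-$\tilde{x}$ exponent backwards across the grid using the explicit solution \eqref{eq2.9} and Gaussian integrals, and invoke Lemma~\ref{Lem3.3} to supply an admissible $\eta>1$ --- is the same as the paper's. But the argument does not close at exactly the point you flag as ``the crux''. The polynomial $f_{\eta}(\omega)$ of \eqref{eq3.28} is quadratic in the time fraction $\omega$ precisely because the paper does \emph{not} define the exponent coefficients by the exact backward recursion $c_{n}=\mathcal{R}(c_{n+1})$: it posits the linear ansatz $c_{N-m}=\eta m\Delta\delta t$ (the term $\eta m\Delta\delta t\,\hspace{-.5pt}\oversymb{\hspace{.5pt}Y}_{\hspace{-2.5pt}t_{N-m}}$ in the induction hypothesis \eqref{eq3.35}) and verifies one-step domination through \eqref{eq3.53}--\eqref{eq3.57}, which is where $f_{\eta}(m/N)\leq f_{\eta}(\omega_{0})<0$ enters. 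With your recursion-defined $c_{n}$, the constraint at step $n$ depends on $c_{n+1}$ through a nonlinear Riccati-type map, so it is \emph{not} ``exactly'' the inequality of Lemma~\ref{Lem3.3} at $\omega=t_{n}/T$; you would still have to prove that the recursion stays bounded along the whole grid, uniformly in $\delta t$, whenever $T<T^{*}$. The linear ansatz is the missing bridge, and supplying it is essentially the content of the paper's proof.

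A second, independent problem is the H\"older step. Decoupling $\exp\{\mu x\,\delta W\}$ from $\exp\{c_{n+1}p^{-2}(x+\gamma\delta W)^{2}\}$ with conjugate exponents $\big(\eta,\eta/(\eta-1)\big)$ replaces the drift constant $\Delta=\lambda+\tfrac{1}{2}\mu^{2}$ by $\lambda+\tfrac{\eta}{2(\eta-1)}\mu^{2}$, which is strictly larger when $\mu\neq0$. Lemma~\ref{Lem3.3} is calibrated to $\Delta$, not to this inflated constant, and as $T\uparrow T^{*}$ the admissible $\eta$ can be forced arbitrarily close to $1$ (for $\mu<0$ and $\lambda$ near $\tfrac{3}{2}\mu^{2}$ the double root of \eqref{eq3.30} is $2+\mu\sqrt{2/\Delta}\downarrow1$), so the H\"older loss $\tfrac{\eta}{2(\eta-1)}\mu^{2}$ blows up exactly where the estimate must be sharpest. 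The paper avoids this entirely by completing the square in the joint Gaussian integral \eqref{eq3.41}--\eqref{eq3.43}, so the cross term costs only $\tfrac{1}{2}\mu^{2}x^{2}\delta t$; its $\eta$ is a slack factor in the ansatz, not a H\"older exponent, and conflating the two roles is not legitimate. On the positive side, your pathwise bound $\tilde{x}_{t_{n+1}}^{2}\leq p^{-2}(x+\gamma\delta W)^{2}+2\alpha\delta t/p$ is a genuine simplification of the two-event split \eqref{eq3.39}--\eqref{eq3.40} and of the mean-value-theorem estimate \eqref{eq3.48}--\eqref{eq3.50}, and could be kept in a corrected proof; but the two gaps above must be repaired before the stated $T^{*}$ is reached.
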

\begin{proof}
If $\Delta\leq0$, the conclusion follows from Lemma \ref{Lem3.1}. If $\Delta>0$ and $T<T^{*}$, we know from Lemma \ref{Lem3.3} that $\exists\hspace{1pt}\eta>1$ independent of $\delta t$ such that \eqref{eq3.25} holds for all $\omega\in[0,1]$. Fix any such $\eta$. We prove by induction on $0\leq m\leq N$ that for sufficiently small values of $\delta t$,
\begin{align}\label{eq3.35}
\E\big[\hspace{1pt}\oversymb{\hspace{-1pt}\Theta\hspace{-1pt}}\hspace{1pt}_{T}\big] &\leq
\E\bigg[\exp\bigg\{\mu\int_{0}^{t_{N-m}}{\hspace{-.3em}\sqrt{\hspace{-.5pt}\oversymb{\hspace{.5pt}Y}_{\hspace{-2.5pt}u}}\,dW_{u}} + \lambda\delta t \sum_{i=0}^{N-m-1}{\hspace{-.2em}\hspace{-.5pt}\oversymb{\hspace{.5pt}Y}_{\hspace{-2.5pt}t_{i}}} + \eta m\Delta\delta t\hspace{1pt}\hspace{-.5pt}\oversymb{\hspace{.5pt}Y}_{\hspace{-2.5pt}t_{N-m}}\bigg\}\bigg] \nonumber\\[3pt]
&\hspace{1em}\times \Big(1+2\eta\Delta\gamma^{2}T\delta t\Big)^{\hspace{-1pt}2m}\exp\Big\{\eta\alpha\Delta(\delta t)^{2}(m-1)m\Big\}.
\end{align}
Note that when $m=0$, we have equality. Let us assume that \eqref{eq3.35} holds for $0\leq m<N$ and prove the inductive step. Conditioning on $\mathcal{G}_{t_{N-m-1}}$, we obtain
\begin{align}\label{eq3.36}
\E\big[\hspace{1pt}\oversymb{\hspace{-1pt}\Theta\hspace{-1pt}}\hspace{1pt}_{T}\big] &\leq
\Big(1+2\eta\Delta\gamma^{2}T\delta t\Big)^{\hspace{-1pt}2m}\exp\Big\{\eta\alpha\Delta(\delta t)^{2}(m-1)m\Big\} \nonumber\\[3pt]
&\hspace{1em}\times \E\bigg[\exp\bigg\{\mu\int_{0}^{t_{N-m-1}}{\hspace{-.3em}\sqrt{\hspace{-.5pt}\oversymb{\hspace{.5pt}Y}_{\hspace{-2.5pt}u}}\,dW_{u}} + \lambda\delta t \sum_{i=0}^{N-m-1}{\hspace{-.2em}\hspace{-.5pt}\oversymb{\hspace{.5pt}Y}_{\hspace{-2.5pt}t_{i}}}\bigg\} \nonumber\\[1pt]
&\hspace{1em}\times \E_{t_{N-m-1}}\!\bigg[\exp\bigg\{\eta m\Delta\delta t\hspace{1pt}\hspace{-.5pt}\oversymb{\hspace{.5pt}Y}_{\hspace{-2.5pt}t_{N-m}} + \mu\sqrt{\hspace{-.5pt}\oversymb{\hspace{.5pt}Y}_{\hspace{-2.5pt}t_{N-m-1}}}\,\delta W_{t_{N-m-1}}\bigg\}\bigg]\bigg].
\end{align}
Define $x=\tilde{x}_{t_{N-m-1}}$ and note that if $Z\sim\mathcal{N}\left(0,1\right)$, then $\mathcal{G}_{t_{N-m-1}} \independent \delta W_{t_{N-m-1}} \eqlaw \sqrt{\delta t}\hspace{1pt}Z$. Let $\mathcal{I}$ be the conditional (inner) expectation in \eqref{eq3.36}, then
\begin{equation}\label{eq3.37}
\mathcal{I} = \E_{0,x}\!\left[\exp\left\{\eta m\Delta\delta t\hspace{1pt}\psi(Z)^{2} + \mu x\sqrt{\delta t}\hspace{1pt}Z\right\}\right],
\end{equation}
where
\begin{equation}\label{eq3.38}
\psi(z) = \frac{x+\gamma\sqrt{\delta t}\hspace{1pt}z}{2(1-\beta\delta t)} + \sqrt{\frac{(x+\gamma\sqrt{\delta t}\hspace{1pt}z)^{2}}{4(1-\beta\delta t)^{2}}+\frac{\alpha\hspace{.5pt}\delta t}{1-\beta\delta t}}\hspace{1pt}.
\end{equation}
On $\Big\{\omega:\hspace{1pt} x+\gamma\sqrt{\delta t}\hspace{1pt}Z(\omega) \leq 0\Big\}$, since $\beta<0$,
\begin{equation}\label{eq3.39}
\psi(Z)^{2} < \alpha\hspace{.5pt}\delta t\hspace{.5pt}.
\end{equation}
On $\Big\{\omega:\hspace{1pt} x+\gamma\sqrt{\delta t}\hspace{1pt}Z(\omega) > 0\Big\}$,
\begin{equation}\label{eq3.40}
\psi(Z)^{2} < \big(x+\gamma\sqrt{\delta t}\hspace{1pt}Z\big)^{2} + 2\alpha\hspace{.5pt}\delta t\hspace{.5pt}.
\end{equation}
Hence, if we let $z_{0}=-\frac{x}{\gamma\sqrt{\delta t}}\hspace{1pt}$,
\begin{align}\label{eq3.41}
\mathcal{I} &\leq \int_{-\infty}^{z_{0}}{\frac{1}{\sqrt{2\pi}}\exp\left\{-\frac{1}{2}z^{2} + \mu x\sqrt{\delta t}\hspace{1pt}z + \eta m\alpha\Delta(\delta t)^{2}\right\}dz} \nonumber\\[3pt]
&\hspace{1em} +\int_{z_{0}}^{\infty}{\frac{1}{\sqrt{2\pi}}\exp\left\{-\frac{1}{2}z^{2} + \mu x\sqrt{\delta t}\hspace{1pt}z + \eta m\Delta\delta t\Big[\big(x+\gamma\sqrt{\delta t}\hspace{1pt}z\big)^{2} + 2\alpha\hspace{.5pt}\delta t\Big]\right\}dz}\hspace{.5pt}.
\end{align}
Suppose that $\delta_{T}\leq\big(2\eta\Delta\gamma^{2}T\big)^{-1}$ and define
\begin{equation}\label{eq3.42}
a = \sqrt{1-2\eta m\Delta\gamma^{2}(\delta t)^{2}}\hspace{1pt},\hspace{3pt} b = \mu x\sqrt{\delta t} + 2\eta m\Delta\gamma x(\delta t)^{\frac{3}{2}},\hspace{3pt} c = \sqrt{\eta m\Delta\delta t(x^{2}+2\alpha\hspace{.5pt}\delta t)}\hspace{1pt}.
\end{equation}
Some straightforward calculations lead to the following upper bound:
\begin{equation}\label{eq3.43}
\mathcal{I} \leq \exp\left\{\frac{1}{2}\hspace{1pt}\mu^{2}x^{2}\delta t + \eta m\alpha\Delta(\delta t)^{2}\right\}\Phi\left(z_{1}\right) + \frac{1}{a}\exp\left\{c^{2} + \frac{b^{2}}{2a^{2}}\right\}\Big\{1-\Phi\left(z_{2}\right)\Big\},
\end{equation}
where $\Phi$ is the standard normal CDF and
\begin{equation}\label{eq3.44}
z_{1} = -\frac{x}{\gamma\sqrt{\delta t}} - \mu x\sqrt{\delta t}\hspace{1pt},\hspace{6pt} z_{2} = -\frac{ax}{\gamma\sqrt{\delta t}} - \frac{b}{a}\hspace{1pt}.
\end{equation}
Suppose that $\delta_{T}\leq\big(\gamma\max\big\{0^{+},-\mu\big\}\big)^{-1}$. Then $1+\mu\gamma\delta t>0$ and hence $z_{1}<0$. From \eqref{eq3.42},
\begin{equation}\label{eq3.45}
a\gamma\sqrt{\delta t}\hspace{1pt}(z_{1}-z_{2}) = (1-a)(\mu\gamma x\delta t-ax) + \gamma\sqrt{\delta t}\hspace{1pt}(b-\mu x\sqrt{\delta t}\hspace{1pt}) = (1-a)x(1+\mu\gamma\delta t).
\end{equation}
Therefore, since $x>0$ and $a\in(0,1]$, $z_{2}\leq z_{1}<0$. From \eqref{eq3.42}, one can also show that
\begin{equation}\label{eq3.46}
c^{2} + \frac{b^{2}}{2a^{2}} - \frac{1}{2}\hspace{1pt}\mu^{2}x^{2}\delta t - \eta m\alpha\Delta(\delta t)^{2} = \eta m\alpha\Delta(\delta t)^{2} + \frac{1}{a^{2}}\hspace{1pt}\eta m\Delta x^{2}\delta t(1+\mu\gamma\delta t)^{2} \geq 0.
\end{equation}
Hence,
\begin{equation}\label{eq3.47}
\mathcal{I} \leq \frac{1}{a}\exp\left\{c^{2} + \frac{b^{2}}{2a^{2}}\right\}\Big\{1+\Phi\left(z_{1}\right)-\Phi\left(z_{2}\right)\Big\}.
\end{equation}
Applying the mean value theorem to $\Phi\in C^{1}$, we can find $z \in \left[z_{2},z_{1}\right]$ such that
\begin{equation}\label{eq3.48}
\Phi(z_{1}) - \Phi(z_{2}) = \left(z_{1}-z_{2}\right)\phi(z) \leq \left(z_{1}-z_{2}\right)\phi\left(z_{1}\right).
\end{equation}
Hence, using \eqref{eq3.45},
\begin{equation}\label{eq3.49}
\Phi(z_{1}) - \Phi(z_{2}) \leq \frac{1-a}{a\sqrt{2\pi}}\hspace{1pt}\cdot\hspace{1pt}\frac{x(1+\mu\gamma\delta t)}{\gamma\sqrt{\delta t}}\exp\left\{-\frac{x^{2}(1+\mu\gamma\delta t)^{2}}{2\gamma^{2}\delta t}\right\}.
\end{equation}
Consider the function $g:(0,\infty)\mapsto\RR$ defined by $g(u) = u\hspace{.5pt}e^{-\frac{u^{2}}{2}}$. Then the global maximum is $e^{-\frac{1}{2}}$, and is achieved when $u=1$. We can thus bound the term on the right-hand side of \eqref{eq3.49} from above to get
\begin{equation}\label{eq3.50}
\Phi(z_{1}) - \Phi(z_{2}) \leq \frac{1-a}{a\sqrt{2\pi e}} \leq \frac{1}{a}-1\hspace{.5pt}.
\end{equation}
Plugging back into \eqref{eq3.47},
\begin{equation}\label{eq3.51}
\mathcal{I} \leq \frac{1}{a^{2}}\exp\left\{c^{2} + \frac{b^{2}}{2a^{2}}\right\}.
\end{equation}
Suppose that $\delta_{T}\leq\frac{\sqrt{5}\hspace{1pt}-1}{4}\hspace{1pt}\big(\eta\Delta\gamma^{2}T\big)^{-1}$, then $a(1+a)>1$ and so $a^{-1}\leq2-a^{2}$. Using \eqref{eq3.42},
\begin{equation}\label{eq3.52}
\frac{1}{a^{2}} \leq \Big(1+2\eta m\Delta\gamma^{2}(\delta t)^{2}\Big)^{2} < \Big(1+2\eta\Delta\gamma^{2}T\delta t\Big)^{2}.
\end{equation}
Combining \eqref{eq3.46}, \eqref{eq3.51} and \eqref{eq3.52}, we deduce that
\begin{equation}\label{eq3.53}
\mathcal{I} < \Big(1+2\eta\Delta\gamma^{2}T\delta t\Big)^{2}\exp\left\{2\eta m\alpha\Delta(\delta t)^{2} + \frac{1}{2}\hspace{1pt}\mu^{2}x^{2}\delta t + \frac{1}{a^{2}}\hspace{1pt}\eta m\Delta x^{2}\delta t(1+\mu\gamma\delta t)^{2}\right\}.
\end{equation}
Due to our choice of $\eta$, the second-order polynomial $f_{\eta}(\omega)$ defined in \eqref{eq3.28} is negative for all $\omega\in[0,1]$. In particular, $f_{\eta}$ attains its maximum on a closed, bounded interval. Hence, there exists $\omega_{0}\in[0,1]$ independent of $\delta t$ and $m$ so that $f_{\eta}(\omega)\leq f_{\eta}(\omega_{0})<0$ for all $\omega\in[0,1]$. Suppose that $\delta_{T}\leq -f_{\eta}(\omega_{0})\big(2\eta(\eta\Delta-\lambda)\gamma^{2}T\big)^{-1}$, then
\begin{equation}\label{eq3.54}
\big(1-a^{2}\big)\bigg(\eta-1+\frac{\mu^{2}}{2\Delta}\bigg) < -f_{\eta}(\omega_{0}) \leq -f_{\eta}(\omega),\hspace{5pt} \forall\hspace{.5pt}\omega\in[0,1].
\end{equation}
Applying the inequality with $\omega=\frac{m}{N}$ and using \eqref{eq3.28}, we get
\begin{equation}\label{eq3.55}
\big(1-a^{2}\big)\bigg(\eta-1+\frac{\mu^{2}}{2\Delta}\bigg) < \eta - 1 - 2\eta^{2}m^{2}\Delta\gamma^{2}(\delta t)^{2} - 2\eta m\mu\gamma\hspace{.5pt}\delta t\hspace{.5pt}.
\end{equation}
However, using \eqref{eq3.42}, we can rewrite it as
\begin{equation}\label{eq3.56}
\frac{1}{a^{2}}\hspace{1pt}\eta m(1+\mu\gamma\delta t)^{2} < \eta(m+1) - 1,
\end{equation}
so
\begin{equation}\label{eq3.57}
\mathcal{I} < \Big(1+2\eta\Delta\gamma^{2}T\delta t\Big)^{2}\exp\Big\{2\eta m\alpha\Delta(\delta t)^{2} - \lambda x^{2}\delta t + \eta(m+1)\Delta x^{2}\delta t\Big\}.
\end{equation}
Substituting back into \eqref{eq3.36} with this upper bound gives the inductive step. Finally, taking $m=N$ in \eqref{eq3.35} leads to
\begin{align}\label{eq3.58}
\E\big[\hspace{1pt}\oversymb{\hspace{-1pt}\Theta\hspace{-1pt}}\hspace{1pt}_{T}\big] &\leq
\left(1+\frac{2\eta\Delta\gamma^{2}T^{2}}{N}\right)^{\hspace{-1pt}2N}\exp\left\{\eta\alpha\Delta T^{2}\bigg(1-\frac{1}{N}\bigg) + \eta\Delta Ty_{0}\right\} \nonumber\\[3pt]
&< \exp\Big\{\eta\Delta T^{2}\big(\alpha + 4\gamma^{2}\big) + \eta\Delta Ty_{0}\Big\}.
\end{align}
The right-hand side is finite and independent of $\delta t$, whence the conclusion. On a side note, following the same line of proof with $\eta=1$ leads to suboptimal sufficient conditions, whereas with $\eta<1$ we fail to achieve the inductive step.
\end{proof}

\subsection{Explicit Euler schemes with absorption fixes}\label{subsec:absorption}

First of all, consider the FTE scheme and let $\hspace{-.5pt}\oversymb{\hspace{.5pt}Y}_{\hspace{-2.5pt}t}=\tilde{y}_{t_{n}}^{+}$, $\forall\hspace{.5pt}t\in[t_{n},t_{n+1})$, with $\tilde{y}$ from \eqref{eq2.2}.

\begin{lemma}\label{Lem3.5}
Suppose that $\Delta>0$. Then we can find $\eta\geq1$ such that for all $\omega\in[0,1]$,
\begin{equation}\label{eq3.59}
\eta^{2}\omega^{2}\xi_{y}^{2}\Delta T^{2} - 2\eta\hspace{.5pt}\omega(k_{y}-\mu\xi_{y})T - 2\eta + 2 \leq 0,
\end{equation}
if and only if $T\leq T^{*}$, where $T^{*}$ is given below:
\begin{enumerate}
\item{When $k_{y}\leq\xi_{y}(\mu+\sqrt{0.5\hspace{.5pt}\Delta})$,
\begin{equation}\label{eq3.60}
T^{*} = \frac{1}{\xi_{y}(\mu+\sqrt{2\Delta})-k_{y}}\hspace{1pt}.
\end{equation}}
\item{When $k_{y}>\xi_{y}(\mu+\sqrt{0.5\hspace{.5pt}\Delta})$,
\begin{equation}\label{eq3.61}
T^{*} = \frac{2(k_{y}-\mu\xi_{y})}{\xi_{y}^{2}\Delta}\hspace{1pt}.
\end{equation}}
\end{enumerate}
\end{lemma}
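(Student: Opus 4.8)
The plan is to argue as in the proof of Lemma~\ref{Lem3.3} (the case $T=0$ being trivial, so take $T>0$). Fix any $\eta\geq1$ and view the left-hand side of \eqref{eq3.59} as the quadratic $f_{\eta}(\omega)=\eta^{2}\xi_{y}^{2}\Delta T^{2}\,\omega^{2}-2\eta(k_{y}-\mu\xi_{y})T\,\omega-2(\eta-1)$ in $\omega$. Since $\Delta>0$ its leading coefficient is positive, so $f_{\eta}$ is convex, and $f_{\eta}(0)=-2(\eta-1)\leq0$. By convexity $f_{\eta}(\omega)\leq(1-\omega)f_{\eta}(0)+\omega f_{\eta}(1)$ on $[0,1]$, so $f_{\eta}\leq0$ there as soon as $f_{\eta}(0)\leq0$ and $f_{\eta}(1)\leq0$, while $f_{\eta}(1)\leq0$ is plainly necessary; hence $f_{\eta}([0,1])\leq0$ if and only if $f_{\eta}(1)\leq0$. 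Writing $p=k_{y}-\mu\xi_{y}$ and $q=\xi_{y}^{2}\Delta>0$, the assertion therefore reduces to the existence of some $\eta\geq1$ with
\[
h(\eta):=qT^{2}\eta^{2}-2(pT+1)\eta+2\leq0,
\]
equivalently $\min_{\eta\geq1}h(\eta)\leq0$.

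Next I would minimise the convex parabola $h$ over $[1,\infty)$. Its vertex is $\eta_{v}=(pT+1)/(qT^{2})$, and for $T>0$ one has $\eta_{v}\leq1$ precisely when $qT^{2}-pT-1\geq0$, i.e.\ when $T\geq T_{0}$, where $T_{0}=\big(p+\sqrt{p^{2}+4q}\big)/(2q)>0$ is the positive root of that expression. On $\{T\geq T_{0}\}$, $\min_{\eta\geq1}h=h(1)=qT^{2}-2pT$, which is $\leq0$ if and only if $p>0$ and $T\leq2p/q$; on $\{0<T<T_{0}\}$, $pT+1>qT^{2}>0$, so $\min_{\eta\geq1}h=h(\eta_{v})=2-(pT+1)^{2}/(qT^{2})$, which is $\leq0$ if and only if $pT+1\geq\sqrt{2q}\,T$ --- the sign being unambiguous since $pT+1>0$ --- i.e.\ if and only if $rT\leq1$ with $r=\xi_{y}(\mu+\sqrt{2\Delta})-k_{y}=\sqrt{2q}-p$. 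Thus the set of admissible times is
\[
S=\Big(\{T:T_{0}\leq T\leq2p/q\}\text{ if }p>0\Big)\ \cup\ \Big((0,T_{0})\cap\{T:rT\leq1\}\Big).
\]

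It remains to glue the two pieces and read off $T^{*}$. Substituting the relevant candidate time into $qT^{2}-pT-1$ and using $\sqrt{2q}=\xi_{y}\sqrt{2\Delta}$ and $p+r=\sqrt{2q}$, one verifies the elementary comparisons $2p/q\leq T_{0}\iff p\leq\sqrt{q/2}$ (for $p>0$) and, when $r>0$, $1/r\leq T_{0}\iff p\leq\sqrt{q/2}$; here $\sqrt{q/2}=\xi_{y}\sqrt{0.5\Delta}$, so $p=\sqrt{q/2}$ is precisely $k_{y}=\xi_{y}(\mu+\sqrt{0.5\Delta})$, and at that value $T_{0}=1/r=2p/q$ with $h(1)=0$. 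Consequently, if $k_{y}\leq\xi_{y}(\mu+\sqrt{0.5\Delta})$ then $p\leq\sqrt{q/2}<\sqrt{2q}$, so $r>0$, the first piece of $S$ is empty (or the single point $T_{0}$ at the borderline), the second piece is $(0,1/r]$, and $S=(0,1/r]$, which gives \eqref{eq3.60}; if $k_{y}>\xi_{y}(\mu+\sqrt{0.5\Delta})$ then $p>\sqrt{q/2}>0$, so $T_{0}<2p/q$, and the second piece contains $(0,T_{0})$ --- either because $r>0$ with $1/r>T_{0}$, or because $r\leq0$ so that the constraint $rT\leq1$ is vacuous --- so the two pieces merge into $S=(0,2p/q]$, which gives \eqref{eq3.61}. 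Open versus closed endpoints and the borderline are checked directly. The main obstacle is exactly this last bookkeeping: one must verify that across the two regimes the thresholds produced by the two branches of $\min_{\eta\geq1}h$ are ordered so that $S$ is a genuine interval $(0,T^{*}]$ with no gap; it is this matching that pins the dividing line to $\sqrt{0.5\Delta}$ and yields the two expressions for $T^{*}$.
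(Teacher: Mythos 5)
Your proof is correct and follows essentially the same route as the paper: reduce the condition on $\omega\in[0,1]$ to the endpoint condition $f_{\eta}(1)\leq0$ (the paper does this via the sign pattern of the roots, you via convexity and $f_{\eta}(0)\leq0$), and then characterize when the resulting quadratic in $\eta$ is nonpositive somewhere on $[1,\infty)$. The paper compresses the second step into ``the polynomial has a real root $\geq1$'' plus unstated algebra leading to its conditions (3.65)--(3.67); your explicit minimization over the vertex location and the matching of the two regimes at $k_{y}=\xi_{y}(\mu+\sqrt{0.5\,\Delta})$ is just that omitted bookkeeping carried out in full, and it checks out.
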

\begin{proof}
Fix any $\eta\geq1$ and define the polynomial
\begin{equation}\label{eq3.62}
f_{\eta}(\omega) = \omega^{2}\eta^{2}\xi_{y}^{2}\Delta T^{2} - 2\omega\eta(k_{y}-\mu\xi_{y})T - 2(\eta-1),
\end{equation}
with real roots
\begin{equation}\label{eq3.63}
\omega_{1,2} = \frac{k_{y}-\mu\xi_{y}\pm\sqrt{(k_{y}-\mu\xi_{y})^{2}+2(\eta-1)\xi_{y}^{2}\Delta}}{\eta\xi_{y}^{2}\Delta T}\hspace{1pt}.
\end{equation}
Since $\omega_{1}\geq0\geq\omega_{2}$, we know that $f_{\eta}([0,1])\leq0$ if and only if $f_{\eta}(1)\leq0$, i.e.,
\begin{equation}\label{eq3.64}
\eta^{2}\xi_{y}^{2}\Delta T^{2} - 2\eta\big[1+(k_{y}-\mu\xi_{y})T\big] + 2 \leq 0.
\end{equation}
However, \eqref{eq3.64} holds for some $\eta\geq1$ if and only if the second-order polynomial in $\eta$ on the left-hand side has a real root greater or equal to one. Therefore, we find the necessary and sufficient conditions:
\begin{equation}\label{eq3.65}
\big[\xi_{y}(\mu+\sqrt{2\Delta})-k_{y}\big]T \leq 1,
\end{equation}
and
\begin{equation}\label{eq3.66}
2\Delta\xi_{y}^{2}T \leq k_{y}-\mu\xi_{y}+\sqrt{(k_{y}-\mu\xi_{y})^{2}+4\xi_{y}^{2}\Delta}
\end{equation}
or
\begin{equation}\label{eq3.67}
k_{y}-\mu\xi_{y}+\sqrt{(k_{y}-\mu\xi_{y})^{2}+4\xi_{y}^{2}\Delta} < 2\Delta\xi_{y}^{2}T \leq 4(k_{y}-\mu\xi_{y}).
\end{equation}
However, it is easy to see that conditions \eqref{eq3.65} -- \eqref{eq3.67} are equivalent to \eqref{eq3.60} -- \eqref{eq3.61}.
\end{proof}

The following result is an extension of Proposition 3.3 in \citet{Cozma:2015b}.

\begin{proposition}\label{Prop3.6}
If $\Delta\leq0$ and $T\geq0$ or otherwise, if $\Delta>0$ and $T\leq T^{*}$, with $T^{*}$ from \eqref{eq3.60} -- \eqref{eq3.61}, then there exists $\delta_{T}>0$ such that for all $\delta t\in(0,\delta_{T})$, the first moment of the exponential functional from \eqref{eq3.2} of the FTE scheme is uniformly bounded, i.e.,
\begin{equation}\label{eq3.68}
\sup_{\delta t\in(0,\delta_{T})}\hspace{1.5pt}\sup_{t\in[0,T]}\E\big[\hspace{1pt}\oversymb{\hspace{-1pt}\Theta\hspace{-1pt}}\hspace{1pt}_{t}\big]<\infty\hspace{.5pt}.
\end{equation}
\end{proposition}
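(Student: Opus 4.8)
The plan is to mirror the structure of the proof of Proposition \ref{Prop3.4}, using Lemma \ref{Lem3.1} to dispose of the case $\Delta\leq0$ and Lemma \ref{Lem3.5} to supply, when $\Delta>0$ and $T\leq T^{*}$, a parameter $\eta\geq1$ independent of $\delta t$ for which the polynomial inequality \eqref{eq3.59} holds on $[0,1]$. Fixing such an $\eta$, I would set up an induction on $0\leq m\leq N$ for a bound of the form
\begin{equation*}
\E\big[\hspace{1pt}\oversymb{\hspace{-1pt}\Theta\hspace{-1pt}}\hspace{1pt}_{T}\big] \leq C(\eta,T,\delta t)^{m}\,\E\bigg[\exp\bigg\{\mu\int_{0}^{t_{N-m}}{\hspace{-.3em}\sqrt{\hspace{-.5pt}\oversymb{\hspace{.5pt}Y}_{\hspace{-2.5pt}u}}\,dW_{u}} + \lambda\delta t \sum_{i=0}^{N-m-1}{\hspace{-.2em}\hspace{-.5pt}\oversymb{\hspace{.5pt}Y}_{\hspace{-2.5pt}t_{i}}} + \eta m\Delta\delta t\hspace{1pt}\hspace{-.5pt}\oversymb{\hspace{.5pt}Y}_{\hspace{-2.5pt}t_{N-m}}\bigg\}\bigg],
\end{equation*}
with $C(\eta,T,\delta t)\to1$ as $\delta t\to0$ in a way that keeps $C^{N}$ bounded. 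The base case $m=0$ is an equality, and the inductive step proceeds by conditioning on $\mathcal{G}_{t_{N-m-1}}$, writing $\delta W_{t_{N-m-1}}\eqlaw\sqrt{\delta t}\,Z$ with $Z\sim\mathcal{N}(0,1)$, and computing the resulting inner Gaussian expectation $\mathcal{I}$ explicitly.

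The key difference from the BEM case is the update rule: for the FTE scheme one has $\hspace{-.5pt}\oversymb{\hspace{.5pt}Y}_{\hspace{-2.5pt}t_{N-m}}=\tilde y_{t_{N-m}}^{+}$ where $\tilde y_{t_{N-m}} = x + k_{y}(\theta_{y}-x^{+})\delta t + \xi_{y}\sqrt{x^{+}}\,\delta W_{t_{N-m-1}}$ with $x=\tilde y_{t_{N-m-1}}$. Because $(\cdot)^{+}$ is not smooth, I would first bound $(\tilde y_{t_{N-m}}^{+})$ above by $\tilde y_{t_{N-m}}$ on $\{\tilde y_{t_{N-m}}>0\}$ and by $0$ otherwise, or more simply use $(\tilde y_{t_{N-m}}^{+})\leq \big|x + k_{y}(\theta_{y}-x^{+})\delta t + \xi_{y}\sqrt{x^{+}}\,\sqrt{\delta t}\,Z\big|^{+}$ and split the integration over $z$ at the zero of the argument; since $\eta m\Delta\delta t>0$, on the region where $\tilde y_{t_{N-m}}\leq0$ the exponent is controlled by a constant of order $\delta t$, while on the complementary region $\tilde y_{t_{N-m}}=x + k_{y}\theta_{y}\delta t - k_{y}x\delta t + \xi_{y}\sqrt{x}\,\sqrt{\delta t}\,Z$ (valid since there $x>0$ forces $x^{+}=x$), which is affine in $Z$, so the Gaussian integral is a completed-square exponential times a standard-normal tail. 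This is the analogue of \eqref{eq3.37}--\eqref{eq3.43}, but now the "$\alpha x^{-1}$" term is absent and instead there is a linear-in-$x$ drift contribution $k_{y}(\theta_{y}-x)\delta t$; I would track the coefficient of $x^{2}\delta t$ and of $x\delta t$ in the exponent and show, via the choice of $\eta$ and a mean-value-theorem estimate on $\Phi$ exactly as in \eqref{eq3.48}--\eqref{eq3.50}, that the normal-tail correction is absorbed into a harmless multiplicative factor $1+O(\delta t)$.

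The main obstacle will be the bookkeeping in the inductive step: I must choose $\delta_{T}>0$ (a finite intersection of conditions of the form $\delta_{T}\leq (\text{const})^{-1}$, as in the repeated "Suppose that $\delta_{T}\leq\dots$" clauses of the previous proof) so that (i) $a:=\sqrt{1-2\eta m\xi_{y}^{2}\Delta(\delta t)^{2}/4}$ or its analogue stays in $(0,1]$, (ii) the linear-in-$x$ terms have the right sign so the normal arguments $z_{1},z_{2}$ are negative and ordered, and (iii) the crucial inequality $\frac{1}{a^{2}}\,\eta m(1+\dots)\leq \eta(m+1)-1$ holds, which is precisely where \eqref{eq3.59} evaluated at $\omega=m/N$ is invoked. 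Once the induction closes, taking $m=N$ gives, as in \eqref{eq3.58}, a bound of the form $\E\big[\hspace{1pt}\oversymb{\hspace{-1pt}\Theta\hspace{-1pt}}\hspace{1pt}_{T}\big]\leq\exp\{\eta\Delta T(\text{const}\cdot T + y_{0}) + (\text{lower order})\}$, finite and independent of $\delta t$; combining with Lemma \ref{Lem3.1} (which upgrades the bound on $\E[\hspace{1pt}\oversymb{\hspace{-1pt}\Theta\hspace{-1pt}}\hspace{1pt}_{T}]$ to the supremum over $t\in[0,T]$) yields \eqref{eq3.68}. One subtlety worth checking is the boundary case $T=T^{*}$, which is included here (unlike in Proposition \ref{Prop3.4} where $T<T^{*}$ is strict): Lemma \ref{Lem3.5} still furnishes an admissible $\eta\geq1$ at $T=T^{*}$ because \eqref{eq3.59} is a non-strict inequality, so the argument goes through verbatim, and I would note that allowing $\eta=1$ here (rather than requiring $\eta>1$) is exactly what makes the endpoint attainable.
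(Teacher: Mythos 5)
Your proposal follows essentially the same route as the paper's proof: Lemma \ref{Lem3.1} for $\Delta\leq0$, Lemma \ref{Lem3.5} to fix $\eta\geq1$ independent of $\delta t$, backward induction on $m$ with the same inductive bound, a Gaussian computation split at the zero of the truncated affine update, a mean-value-theorem estimate on $\Phi$, and the inequality \eqref{eq3.59} at $\omega=m/N$ to close the induction, together with the correct observation that the non-strictness of \eqref{eq3.59} and the allowance $\eta\geq1$ are what make the endpoint $T=T^{*}$ attainable. The only adjustment needed is that some details you transplant from the BEM proof (the variance-rescaling factor $a=\sqrt{1-\cdots}$ and a coefficient of $x^{2}\delta t$) do not arise for the FTE scheme, whose update is affine rather than quadratic in $Z$, so the completed square requires no rescaling and the closing inequality takes the simpler form $a\leq\eta(m+1)-1$ with $a$ as in \eqref{eq3.76}.
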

\begin{proof}
If $\Delta\leq0$, this is a consequence of Lemma \ref{Lem3.1}. If $\Delta>0$ and $T\leq T^{*}$, we know from Lemma \ref{Lem3.5} that $\exists\hspace{1pt}\eta\geq1$ independent of $\delta t$ such that \eqref{eq3.59} holds for all $\omega\in[0,1]$. Fix any such $\eta$. We prove by induction on $0\leq m\leq N$ that for sufficiently small values of $\delta t$,
\begin{align}\label{eq3.69}
\E\big[\hspace{1pt}\oversymb{\hspace{-1pt}\Theta\hspace{-1pt}}\hspace{1pt}_{T}\big] &\leq
\E\bigg[\exp\bigg\{\mu\int_{0}^{t_{N-m}}{\hspace{-.3em}\sqrt{\hspace{-.5pt}\oversymb{\hspace{.5pt}Y}_{\hspace{-2.5pt}u}}\,dW_{u}} + \lambda\delta t \sum_{i=0}^{N-m-1}{\hspace{-.2em}\hspace{-.5pt}\oversymb{\hspace{.5pt}Y}_{\hspace{-2.5pt}t_{i}}} + \eta m\Delta\delta t\hspace{1pt}\hspace{-.5pt}\oversymb{\hspace{.5pt}Y}_{\hspace{-2.5pt}t_{N-m}}\bigg\}\bigg] \nonumber\\[3pt]
&\hspace{1em}\times \exp\Big\{0.5\hspace{.5pt}\eta\big(k_{y}\theta_{y} + \nu_{y}\xi_{y}\big)\Delta(\delta t)^{2}(m-1)m\Big\},
\end{align}
where
\begin{equation}\label{eq3.70}
\nu_{y} = \sqrt{\frac{1}{2\pi}\hspace{1pt}\xi_{y}^{2}+\frac{1}{2\pi}\hspace{1pt}\sqrt{\xi_{y}^{4}+2k_{y}^{2}\theta_{y}^{2}}}\hspace{1pt}.
\end{equation}
Note that when $m=0$, we have equality. Let us assume that \eqref{eq3.69} holds for $0 \leq m < N$ and prove the inductive step. Conditioning on $\mathcal{G}_{t_{N-m-1}}$, we obtain
\begin{align}\label{eq3.71}
\E\big[\hspace{1pt}\oversymb{\hspace{-1pt}\Theta\hspace{-1pt}}\hspace{1pt}_{T}\big] &\leq
\exp\Big\{0.5\hspace{.5pt}\eta\big(k_{y}\theta_{y} + \nu_{y}\xi_{y}\big)\Delta(\delta t)^{2}(m-1)m\Big\} \nonumber\\[2pt]
&\hspace{1em}\times \E\bigg[\exp\bigg\{\mu\int_{0}^{t_{N-m-1}}{\hspace{-.3em}\sqrt{\hspace{-.5pt}\oversymb{\hspace{.5pt}Y}_{\hspace{-2.5pt}u}}\,dW_{u}} + \lambda\delta t \sum_{i=0}^{N-m-1}{\hspace{-.2em}\hspace{-.5pt}\oversymb{\hspace{.5pt}Y}_{\hspace{-2.5pt}t_{i}}}\bigg\} \nonumber\\[1pt]
&\hspace{1em}\times \E_{t_{N-m-1}}\!\bigg[\exp\bigg\{\eta m\Delta\delta t\hspace{1pt}\hspace{-.5pt}\oversymb{\hspace{.5pt}Y}_{\hspace{-2.5pt}t_{N-m}} + \mu\sqrt{\hspace{-.5pt}\oversymb{\hspace{.5pt}Y}_{\hspace{-2.5pt}t_{N-m-1}}}\,\delta W_{t_{N-m-1}}\bigg\}\bigg]\bigg].
\end{align}
Define $\tilde{x} = \tilde{y}_{t_{N-m-1}}$ and $x = \hspace{-.5pt}\oversymb{\hspace{.5pt}Y}_{\hspace{-2.5pt}t_{N-m-1}}$. If $Z\sim\mathcal{N}\left(0,1\right)$, then $\mathcal{G}_{t_{N-m-1}} \independent \delta W_{t_{N-m-1}} \eqlaw \sqrt{\delta t}\hspace{1pt}Z$. Let $\mathcal{I}$ be the conditional expectation in \eqref{eq3.71}, then
\begin{equation}\label{eq3.72}
\mathcal{I} \leq \E_{0,x}\!\left[\exp\left\{\eta m\Delta\delta t\hspace{.5pt}\max\!\Big[0,\hspace{1pt} x + k_{y}(\theta_{y} - x)\delta t + \xi_{y}\sqrt{x\delta t}\hspace{1pt}Z\Big] + \mu\sqrt{x\delta t}\hspace{1pt}Z\right\}\right].
\end{equation}
If $x=0$, then
\begin{equation}\label{eq3.73}
\mathcal{I} \leq \exp\Big\{\eta m k_{y}\theta_{y}\Delta(\delta t)^{2}\Big\}.
\end{equation}
If $x>0$ and
\begin{equation}\label{eq3.74}
z_{0}=-\frac{k_{y}\theta_{y}\delta t + (1-k_{y}\delta t)x}{\xi_{y}\sqrt{x\delta t}}\hspace{1pt},
\end{equation}
then
\begin{align}\label{eq3.75}
\mathcal{I} &\leq
\int_{z_{0}}^{\infty}{\frac{1}{\sqrt{2\pi}}\exp\left\{-\frac{1}{2}z^{2} + \mu\sqrt{x\delta t}\hspace{1pt}z + \eta m\Delta\delta t\Big[x + k_{y}(\theta_{y} - x)\delta t + \xi_{y}\sqrt{x\delta t}\hspace{1pt}z\Big]\right\}dz} \nonumber\\[3pt]
&\hspace{1em} + \int_{-\infty}^{z_{0}}{\frac{1}{\sqrt{2\pi}}\exp\left\{-\frac{1}{2}z^{2} + \mu\sqrt{x\delta t}\hspace{1pt}z\right\}dz}\hspace{.5pt}.
\end{align}
Suppose that $\delta_{T}\leq\max\big\{0^{+},k_{y}-\mu\xi_{y}\big\}^{-1}$ and define
\begin{equation}\label{eq3.76}
a = \eta m\big[1 - (k_{y}-\mu\xi_{y})\delta t\big] + \frac{1}{2}\hspace{1pt}\eta^{2}m^{2}\xi_{y}^{2}\Delta(\delta t)^{2}\geq 0.
\end{equation}
Some straightforward calculations lead to the following upper bound:
\begin{equation}\label{eq3.77}
\mathcal{I} \leq \exp\bigg\{\eta m k_{y}\theta_{y}\Delta(\delta t)^{2} + \frac{1}{2}\hspace{1pt}\mu^{2}x\delta t + a\Delta x\delta t\bigg\}\Big\{1 + \Phi(z_{1}) - \Phi(z_{2})\Big\},
\end{equation}
where
\begin{equation}\label{eq3.78}
z_{1} = z_{0} - \mu\sqrt{x\delta t} = -\frac{k_{y}\theta_{y}\delta t + \big[1 - (k_{y}-\mu\xi_{y})\delta t\big]x}{\xi_{y}\sqrt{x\delta t}}
\end{equation}
and
\begin{equation}\label{eq3.79}
z_{2} = z_{0} - \mu\sqrt{x\delta t} - \eta m\xi_{y}\Delta(\delta t)^{3/2}\sqrt{x}\hspace{1pt}.
\end{equation}
Clearly $z_{2} \leq z_{1} < 0$, so we can find $z \in \left[z_{2},z_{1}\right]$ such that
\begin{equation}\label{eq3.80}
\Phi(z_{1}) - \Phi(z_{2}) = \left(z_{1}-z_{2}\right)\phi(z) \leq \left(z_{1}-z_{2}\right)\phi\left(z_{1}\right).
\end{equation}
Hence, using \eqref{eq3.78} -- \eqref{eq3.80},
\begin{equation}\label{eq3.81}
\Phi(z_{1}) - \Phi(z_{2}) \leq 
\frac{1}{\sqrt{2\pi}}\hspace{1pt}\eta m\xi_{y}\Delta(\delta t)^{3/2}g(x),
\end{equation}
where $g:(0,\infty)\mapsto\RR$ is defined by
\begin{equation}\label{eq3.82}
g(x) = \sqrt{x}\hspace{1pt}\exp\left\{-\frac{\big[k_{y}\theta_{y}\delta t + \left[1 - (k_{y}-\mu\xi_{y})\delta t\right]x\big]^2}{2\hspace{.5pt}\xi_{y}^{2}x\delta t}\right\}.
\end{equation}
Suppose that $\delta_{T}\leq\frac{\sqrt{2}-1}{\sqrt{2}}\max\big\{0^{+},k_{y}-\mu\xi_{y}\big\}^{-1}$. One can easily find the global maximum of the function, and hence an upper bound:
\begin{equation}\label{eq3.83}
g(x) < \nu_{y}\sqrt{2\pi\delta t}\hspace{1pt}.
\end{equation}
Substituting back into \eqref{eq3.77} with \eqref{eq3.81} and \eqref{eq3.83}, we get
\begin{equation}\label{eq3.84}
\mathcal{I} \leq \exp\bigg\{\eta m\big(k_{y}\theta_{y}+\nu_{y}\xi_{y}\big)\Delta(\delta t)^{2} + \frac{1}{2}\hspace{1pt}\mu^{2}x\delta t + a\Delta x\delta t\bigg\}.
\end{equation}
Note from \eqref{eq3.73} that this holds when $x=0$ as well. Applying \eqref{eq3.59} with $\omega=\frac{m}{N}$ leads to
\begin{equation}\label{eq3.85}
\eta^{2}m^{2}\xi_{y}^{2}\Delta(\delta t)^{2} - 2\eta m(k_{y}-\mu\xi_{y})\delta t - 2\eta + 2 \leq 0\hspace{.5pt}.
\end{equation}
Hence, from \eqref{eq3.76},
\begin{equation}\label{eq3.86}
a \leq \eta(m+1) - 1.
\end{equation}
Therefore,
\begin{equation}\label{eq3.87}
\mathcal{I} \leq \exp\bigg\{\eta m\big(k_{y}\theta_{y}+\nu_{y}\xi_{y}\big)\Delta(\delta t)^{2} - \lambda x\delta t + \eta(m+1)\Delta x\delta t\bigg\}.
\end{equation}
Substituting back into \eqref{eq3.71} with this upper bound gives the inductive step. Finally, taking $m=N$ in \eqref{eq3.69} leads to
\begin{equation}\label{eq3.88}
\E\big[\hspace{1pt}\oversymb{\hspace{-1pt}\Theta\hspace{-1pt}}\hspace{1pt}_{T}\big] < \exp\left\{\frac{1}{2}\hspace{1pt}\eta\Delta T^{2}\big(k_{y}\theta_{y}+\nu_{y}\xi_{y}\big) + \eta\Delta Ty_{0}\right\}.
\end{equation}
The right-hand side is finite and independent of $\delta t$, whence the conclusion. On a side note, if we follow the same line of proof with $\eta<1$, we fail to achieve the inductive step.
\end{proof}

Second, we consider the partial truncation and the absorption schemes and let $\hspace{-.5pt}\oversymb{\hspace{.5pt}Y}_{\hspace{-2.5pt}t}=\tilde{y}_{t_{n}}^{+}$, $\forall\hspace{.5pt}t\in[t_{n},t_{n+1})$, with $\tilde{y}$ defined in \eqref{eq2.1} and \eqref{eq2.3}, respectively.

\begin{proposition}\label{Prop3.7}
If $\Delta\leq0$ and $T\geq0$ or otherwise, if $\Delta>0$ and $T\leq T^{*}$, with $T^{*}$ from \eqref{eq3.60} -- \eqref{eq3.61}, then there exists $\delta_{T}>0$ such that for all $\delta t\in(0,\delta_{T})$, the first moments of the exponential functionals from \eqref{eq3.2} of the partial truncation and absorption schemes are uniformly bounded, i.e.,
\begin{equation}\label{eq3.89}
\sup_{\delta t\in(0,\delta_{T})}\hspace{1.5pt}\sup_{t\in[0,T]}\E\big[\hspace{1pt}\oversymb{\hspace{-1pt}\Theta\hspace{-1pt}}\hspace{1pt}_{t}\big]<\infty\hspace{.5pt}.
\end{equation}
\end{proposition}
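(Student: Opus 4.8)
The plan is to obtain Proposition~\ref{Prop3.7} from the proof of Proposition~\ref{Prop3.6} with almost no change, since the structure of the PTE and ABS schemes that actually enters that argument is the same as for the FTE scheme. The key observation is that the time-continuous interpolation of both schemes is again $\oversymb{Y}_{t}=\tilde{y}_{t_n}^{+}\geq0$ for $t\in[t_n,t_{n+1})$. Hence, if $\Delta\leq0$, Lemma~\ref{Lem3.1} immediately gives $\sup_{t\in[0,T]}\E\big[\oversymb{\Theta}_{t}\big]=1$, uniformly in $\delta t$; and if $\Delta>0$ with $T\leq T^{*}$, the admissible exponent $\eta\geq1$ is still the one supplied by Lemma~\ref{Lem3.5}, because the critical time in \eqref{eq3.60}--\eqref{eq3.61} is precisely the one appearing here. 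I would then run the induction \eqref{eq3.69} on $0\leq m\leq N$ verbatim, conditioning on $\mathcal{G}_{t_{N-m-1}}$ as in \eqref{eq3.71}.

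The only point at which the specific $\tilde{y}$-recursion enters the proof of Proposition~\ref{Prop3.6} is the derivation of the upper bound \eqref{eq3.72} for the inner conditional expectation $\mathcal{I}$; everything downstream — the split of the Gaussian integral at $z_0$ in \eqref{eq3.74}, the constant $a$ in \eqref{eq3.76}, the bound \eqref{eq3.77}, the mean value theorem estimate \eqref{eq3.80}--\eqref{eq3.83}, the use of \eqref{eq3.59} with $\omega=m/N$, and the telescoping \eqref{eq3.88} — depends only on $x\equiv\oversymb{Y}_{t_{N-m-1}}\geq0$ and the form of \eqref{eq3.72}. So it suffices to re-establish \eqref{eq3.72} for the PTE and ABS schemes. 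Writing $x=\tilde{y}_{t_{N-m-1}}^{+}$ and $\delta W_{t_{N-m-1}}\eqlaw\sqrt{\delta t}\,Z$ with $Z\sim\mathcal{N}(0,1)$ independent of $\mathcal{G}_{t_{N-m-1}}$: for the ABS scheme \eqref{eq2.3} one has $\tilde{y}_{t_{N-m}}=x+k_y(\theta_y-x)\delta t+\xi_y\sqrt{x\delta t}\,Z$ identically, so $\oversymb{Y}_{t_{N-m}}=\max(0,\tilde{y}_{t_{N-m}})$ makes $\mathcal{I}$ equal to the right-hand side of \eqref{eq3.72} and the bound holds with equality; for the PTE scheme \eqref{eq2.1}, $\tilde{y}_{t_{N-m}}=(1-k_y\delta t)\tilde{y}_{t_{N-m-1}}+k_y\theta_y\delta t+\xi_y\sqrt{x\delta t}\,Z$, and imposing the extra requirement $\delta_T\leq k_y^{-1}$ so that $1-k_y\delta t\geq0$, together with $\tilde{y}_{t_{N-m-1}}\leq\tilde{y}_{t_{N-m-1}}^{+}=x$, yields $\tilde{y}_{t_{N-m}}\leq x+k_y(\theta_y-x)\delta t+\xi_y\sqrt{x\delta t}\,Z$; since $\eta m\Delta\delta t\geq0$ and $u\mapsto\max(0,u)$ is non-decreasing, \eqref{eq3.72} follows.

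From here the proof of Proposition~\ref{Prop3.6} applies word for word, with $\delta_T$ chosen as the minimum of the thresholds appearing there and, for the PTE scheme, also $k_y^{-1}$. Taking $m=N$ gives the same explicit, $\delta t$-independent finite bound on $\E\big[\oversymb{\Theta}_{T}\big]$ as in \eqref{eq3.88}, and Lemma~\ref{Lem3.1} (applied with $\Delta>0$) upgrades this to the asserted uniform bound on $\sup_{t\in[0,T]}\E\big[\oversymb{\Theta}_{t}\big]$, uniformly over $\delta t\in(0,\delta_T)$. There is essentially no serious obstacle: the only genuinely new content is the pathwise domination of the Euler increment, which is an identity for ABS and, for PTE, requires only the harmless additional condition $\delta t<k_y^{-1}$.
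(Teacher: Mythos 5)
Your proposal is correct and follows essentially the same route as the paper: the paper's proof of Proposition~\ref{Prop3.7} is precisely the one-line observation that the argument of Proposition~\ref{Prop3.6} carries over once \eqref{eq3.72} is verified, holding with equality for the absorption scheme and as an inequality for the partial truncation scheme under the extra condition $\delta_{T}\leq k_{y}^{-1}$. Your write-up merely fills in the details of that observation, in particular the pathwise domination $(1-k_{y}\delta t)\tilde{y}_{t_{N-m-1}}\leq(1-k_{y}\delta t)\tilde{y}_{t_{N-m-1}}^{+}$ and the monotonicity of $u\mapsto\max(0,u)$, which is exactly what the paper leaves implicit.
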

\begin{proof}
We follow the argument of Proposition \ref{Prop3.6} closely and note that \eqref{eq3.72} holds for the partial truncation scheme if $\delta_{T}\leq k_{y}^{-1}$, while for the absorption scheme we have equality.
\end{proof}

\subsection{Explicit Euler schemes with reflection fixes}\label{subsec:reflection}

First, consider the reflection scheme and let $\hspace{-.5pt}\oversymb{\hspace{.5pt}Y}_{\hspace{-2.5pt}t}=|\tilde{y}_{t_{n}}|$, $\forall\hspace{.5pt}t\in[t_{n},t_{n+1})$, with $\tilde{y}$ from \eqref{eq2.4}.

\begin{proposition}\label{Prop3.8}
If $\Delta\leq0$ and $T\geq0$ or otherwise, if $\Delta>0$ and $T\leq T^{*}$, then there exists $\delta_{T}>0$ such that for all $\delta t\in(0,\delta_{T})$, the first moment of the exponential functional from \eqref{eq3.2} of the reflection scheme is uniformly bounded, i.e.,
\begin{equation}\label{eq3.90}
\sup_{\delta t\in(0,\delta_{T})}\hspace{1.5pt}\sup_{t\in[0,T]}\E\big[\hspace{1pt}\oversymb{\hspace{-1pt}\Theta\hspace{-1pt}}\hspace{1pt}_{t}\big]<\infty\hspace{.5pt},
\end{equation}
where $T^{*}$ is given below:
\begin{enumerate}
\item{When $k_{y}\leq\xi_{y}(|\mu|+\sqrt{0.5\hspace{.5pt}\Delta})$,
\begin{equation}\label{eq3.91}
T^{*} = \frac{1}{\xi_{y}(|\mu|+\sqrt{2\Delta})-k_{y}}\hspace{1pt}.
\end{equation}}
\item{When $k_{y}>\xi_{y}(|\mu|+\sqrt{0.5\hspace{.5pt}\Delta})$,
\begin{equation}\label{eq3.92}
T^{*} = \frac{2(k_{y}-|\mu|\xi_{y})}{\xi_{y}^{2}\Delta}\hspace{1pt}.
\end{equation}}
\end{enumerate}
\end{proposition}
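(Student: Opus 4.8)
As in Propositions~\ref{Prop3.4}, \ref{Prop3.6} and \ref{Prop3.7}, the case $\Delta\le 0$ is immediate from Lemma~\ref{Lem3.1}, so assume $\Delta>0$ and $T\le T^{*}$ with $T^{*}$ as in \eqref{eq3.91}--\eqref{eq3.92}. First I would record the analogue of Lemma~\ref{Lem3.5} obtained by replacing $\mu$ with $|\mu|$ throughout its proof (note $\Delta$ only involves $\mu^{2}$ and is unchanged): this yields some $\eta\ge 1$, independent of $\delta t$, with $\eta^{2}\omega^{2}\xi_{y}^{2}\Delta T^{2}-2\eta\hspace{.5pt}\omega(k_{y}-|\mu|\xi_{y})T-2\eta+2\le 0$ for all $\omega\in[0,1]$, precisely because $T\le T^{*}$. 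Fix such an $\eta$. For $\delta t<k_{y}^{-1}$ one has, by the triangle inequality applied to \eqref{eq2.4} with $\sqrt{|\tilde y_{t_{n}}|}=\sqrt{\hspace{-.5pt}\oversymb{\hspace{.5pt}Y}_{\hspace{-2.5pt}t_{n}}}$, the one-step estimate $\hspace{-.5pt}\oversymb{\hspace{.5pt}Y}_{\hspace{-2.5pt}t_{n+1}}=|\tilde y_{t_{n+1}}|\le \hspace{-.5pt}\oversymb{\hspace{.5pt}Y}_{\hspace{-2.5pt}t_{n}}+k_{y}(\theta_{y}-\hspace{-.5pt}\oversymb{\hspace{.5pt}Y}_{\hspace{-2.5pt}t_{n}})\delta t+\xi_{y}\sqrt{\hspace{-.5pt}\oversymb{\hspace{.5pt}Y}_{\hspace{-2.5pt}t_{n}}}\,|\delta W_{t_{n}}|$. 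I would then run the same induction on $0\le m\le N$ as in the proof of Proposition~\ref{Prop3.6} (with a suitably modified constant $\nu_{y}$ in the accumulator), reducing the inductive step, after conditioning on $\mathcal G_{t_{N-m-1}}$, to an upper bound for $\mathcal I=\E_{0,x}\!\big[\exp\{\eta m\Delta\delta t\,|w|+\mu\sqrt{x\delta t}\,Z\}\big]$, where $x=\hspace{-.5pt}\oversymb{\hspace{.5pt}Y}_{\hspace{-2.5pt}t_{N-m-1}}$, $\tilde x=\tilde y_{t_{N-m-1}}$ ($|\tilde x|=x$), $Z\sim\mathcal N(0,1)$, $w_{0}=\tilde x(1-k_{y}\delta t)+k_{y}\theta_{y}\delta t$ and $w=w_{0}+\xi_{y}\sqrt{x\delta t}\,Z$.

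\textbf{The reflection term.} The point where extra care is needed is that $|w|=w$ on $\{Z>z_{0}\}$ and $|w|=-w$ on $\{Z<z_{0}\}$ with $z_{0}=-w_{0}/(\xi_{y}\sqrt{x\delta t})$, and $z_{0}$ need not be negative. I would \emph{not} use the crude bound $|w|\le|w_{0}|+\xi_{y}\sqrt{x\delta t}\,|Z|$: it produces an additive $O(\sqrt{x}\,(\delta t)^{3/2})$ term (equivalently a multiplicative factor $\ge 1+c\sqrt x\sqrt{\delta t}$) that cannot be absorbed over $N$ steps. Instead, split the Gaussian integral at $z_{0}$ and evaluate the two pieces exactly by completing the square, giving $\mathcal I = P_{+}\Phi(\alpha)+P_{-}\Phi(2\rho-\alpha)$, where $\rho=\eta m\Delta\delta t\,\xi_{y}\sqrt{x\delta t}$, $\lambda_{1,2}=(\pm\eta m\Delta\delta t\,\xi_{y}+\mu)\sqrt{x\delta t}$, $\alpha=\lambda_{1}-z_{0}$, $P_{+}=\exp\{\eta m\Delta\delta t\,w_{0}+\tfrac12\lambda_{1}^{2}\}$ and $P_{-}=\exp\{-\eta m\Delta\delta t\,w_{0}+\tfrac12\lambda_{2}^{2}\}$. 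Since $|w_{0}|\le\kappa:=x+k_{y}(\theta_{y}-x)\delta t$ and $\max(\lambda_{1}^{2},\lambda_{2}^{2})=(\eta m\Delta\delta t\,\xi_{y}+|\mu|)^{2}x\delta t$ — this is exactly where $|\mu|$ enters the critical time — both $P_{\pm}$ are bounded by $e^{M}$ with $M=\eta m\Delta\delta t\,\kappa+\tfrac12(\eta m\Delta\delta t\,\xi_{y}+|\mu|)^{2}x\delta t$, which is the Proposition~\ref{Prop3.6} main term with $\mu$ replaced by $|\mu|$.

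\textbf{The cross-term estimate.} It remains to bound $\Phi(\alpha)+\Phi(2\rho-\alpha)=1+\int_{-\alpha}^{\,2\rho-\alpha}\phi(t)\,dt\le 1+\tfrac{2\rho}{\sqrt{2\pi}}\,e^{-d(x)^{2}/2}$, where $d(x)=\operatorname{dist}\big(0,[-\alpha,2\rho-\alpha]\big)$ and $\phi$ is the standard normal density. Writing $2\rho/\sqrt{2\pi}=\tfrac{2\eta m\Delta\delta t\,\xi_{y}\sqrt{\delta t}}{\sqrt{2\pi}}\cdot\sqrt{x}$, the claim reduces to $\sqrt{x}\,e^{-d(x)^{2}/2}\le\nu_{y}\sqrt{\delta t}$ uniformly in $x>0$ for $\delta t$ small, which then yields $\mathcal I\le\exp\{M+\tfrac{2\eta m\Delta\xi_{y}\nu_{y}}{\sqrt{2\pi}}(\delta t)^{2}\}$, i.e.\ an $O(m(\delta t)^{2})$ perturbation of the FTE bound; from there the induction closes verbatim as in Proposition~\ref{Prop3.6}, and $m=N$ gives \eqref{eq3.90}. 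The uniform bound on $\sqrt x\,e^{-d(x)^2/2}$ is the technical heart: distinguishing the sign of $\tilde x$, one checks (as in \eqref{eq3.78}--\eqref{eq3.83}) that whenever $0\notin[-\alpha,2\rho-\alpha]$ the relevant endpoint equals $\pm(Ax-k_{y}\theta_{y}\delta t)/(\xi_{y}\sqrt{x\delta t})$ for some $A=1+O(\delta t)\in(\tfrac12,2)$, so $\sqrt x\,e^{-d^2/2}=\sqrt x\exp\{-(Ax\mp k_{y}\theta_{y}\delta t)^{2}/(2\xi_{y}^{2}x\delta t)\}$ attains its maximum at some $x_{*}=O(\delta t)$, hence is $O(\sqrt{\delta t})$; and in the remaining (narrow) case $0\in[-\alpha,2\rho-\alpha]$ the constraints force $x=O(\delta t)$, so the trivial bound $\sqrt x\,e^{-d^2/2}\le\sqrt x=O(\sqrt{\delta t})$ already suffices. \textbf{The main obstacle} is thus organising this case analysis so that the reflection never costs a genuine multiplicative $O(1)$ factor and so that the leftover perturbation is $O(m(\delta t)^{2})$ with no residual $\sqrt{x}$-dependence.
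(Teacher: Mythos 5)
Your proposal follows essentially the same route as the paper's proof: the $|\mu|$-variant of Lemma~\ref{Lem3.5}, the same induction on $m$, the exact splitting of the Gaussian integral at the sign change of the reflected increment followed by completion of squares, and the same uniform bound of $\sqrt{x}\,\exp\{-(\cdot)^{2}/(2\xi_{y}^{2}x\delta t)\}$ by $O(\sqrt{\delta t})$ to control the $\Phi(z_{1})-\Phi(z_{2})$ cross term, with the small-$x$ regime handled trivially. The only cosmetic difference is that you phrase the cross-term estimate via the distance from the origin to the interval $[-\alpha,2\rho-\alpha]$, whereas the paper runs the equivalent sign analysis on $z_{1},z_{2}$ and $z'_{1},z'_{2}$ directly.
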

\begin{proof}
If $\Delta\leq0$, this is a consequence of Lemma \ref{Lem3.1}. If $\Delta>0$ and $T\leq T^{*}$, we know from Lemma \ref{Lem3.5} that $\exists\hspace{1pt}\eta\geq1$ independent of $\delta t$ such that for all $\omega\in[0,1]$,
\begin{equation}\label{eq3.93}
\eta^{2}\omega^{2}\xi_{y}^{2}\Delta T^{2} - 2\eta\hspace{.5pt}\omega(k_{y}-|\mu|\xi_{y})T - 2\eta + 2 \leq 0.
\end{equation}
Fix any such $\eta$. Next, we prove by induction on $0\leq m\leq N$ that for sufficiently small~values of $\delta t$, we have
\begin{align}\label{eq3.94}
\E\big[\hspace{1pt}\oversymb{\hspace{-1pt}\Theta\hspace{-1pt}}\hspace{1pt}_{T}\big] &\leq
\E\bigg[\exp\bigg\{\mu\int_{0}^{t_{N-m}}{\hspace{-.3em}\sqrt{\hspace{-.5pt}\oversymb{\hspace{.5pt}Y}_{\hspace{-2.5pt}u}}\,dW_{u}} + \lambda\delta t \sum_{i=0}^{N-m-1}{\hspace{-.2em}\hspace{-.5pt}\oversymb{\hspace{.5pt}Y}_{\hspace{-2.5pt}t_{i}}} + \eta m\Delta\delta t\hspace{1pt}\hspace{-.5pt}\oversymb{\hspace{.5pt}Y}_{\hspace{-2.5pt}t_{N-m}}\bigg\}\bigg] \nonumber\\[3pt]
&\hspace{1em}\times \exp\Big\{\eta\big(0.5\hspace{.5pt}k_{y}\theta_{y} + \nu_{y}\xi_{y}\big)\Delta(\delta t)^{2}(m-1)m\Big\},
\end{align}
with $\nu_{y}$ from \eqref{eq3.70}. Note that when $m=0$, we have equality. Let us assume that \eqref{eq3.94} holds for $0 \leq m < N$ and prove the inductive step. Conditioning on $\mathcal{G}_{t_{N-m-1}}$, we obtain
\begin{align}\label{eq3.95}
\E\big[\hspace{1pt}\oversymb{\hspace{-1pt}\Theta\hspace{-1pt}}\hspace{1pt}_{T}\big] &\leq
\exp\Big\{\eta\big(0.5\hspace{.5pt}k_{y}\theta_{y} + \nu_{y}\xi_{y}\big)\Delta(\delta t)^{2}(m-1)m\Big\} \nonumber\\[2pt]
&\hspace{1em}\times \E\bigg[\exp\bigg\{\mu\int_{0}^{t_{N-m-1}}{\hspace{-.3em}\sqrt{\hspace{-.5pt}\oversymb{\hspace{.5pt}Y}_{\hspace{-2.5pt}u}}\,dW_{u}} + \lambda\delta t \sum_{i=0}^{N-m-1}{\hspace{-.2em}\hspace{-.5pt}\oversymb{\hspace{.5pt}Y}_{\hspace{-2.5pt}t_{i}}}\bigg\} \nonumber\\[1pt]
&\hspace{1em}\times \E_{t_{N-m-1}}\!\bigg[\exp\bigg\{\eta m\Delta\delta t\hspace{1pt}\hspace{-.5pt}\oversymb{\hspace{.5pt}Y}_{\hspace{-2.5pt}t_{N-m}} + \mu\sqrt{\hspace{-.5pt}\oversymb{\hspace{.5pt}Y}_{\hspace{-2.5pt}t_{N-m-1}}}\,\delta W_{t_{N-m-1}}\bigg\}\bigg]\bigg].
\end{align}
Define $\tilde{x} = \tilde{y}_{t_{N-m-1}}$ and $x = \hspace{-.5pt}\oversymb{\hspace{.5pt}Y}_{\hspace{-2.5pt}t_{N-m-1}}$. If $Z\sim\mathcal{N}\left(0,1\right)$, then $\mathcal{G}_{t_{N-m-1}} \independent \delta W_{t_{N-m-1}} \eqlaw \sqrt{\delta t}\hspace{1pt}Z$. Let $\mathcal{I}$ be the conditional expectation in \eqref{eq3.95}, then
\begin{equation}\label{eq3.96}
\mathcal{I} = \E_{0,\tilde{x}}\!\left[\exp\left\{\eta m\Delta\delta t\hspace{1pt}\big|\tilde{x} + k_{y}(\theta_{y} - \tilde{x})\delta t + \xi_{y}\sqrt{x\delta t}\hspace{1pt}Z\big| + \mu\sqrt{x\delta t}\hspace{1pt}Z\right\}\right].
\end{equation}
If $\tilde{x}=0$, then
\begin{equation}\label{eq3.97}
\mathcal{I} = \exp\Big\{\eta m k_{y}\theta_{y}\Delta(\delta t)^{2}\Big\}.
\end{equation}
If $\tilde{x}>0$ and
\begin{equation}\label{eq3.98}
z_{0}=-\frac{k_{y}\theta_{y}\delta t + (1-k_{y}\delta t)x}{\xi_{y}\sqrt{x\delta t}}\hspace{1pt},
\end{equation}
since $x=\tilde{x}$, we get
\begin{align}\label{eq3.99}
\mathcal{I} &=
\int_{-\infty}^{z_{0}}{\frac{1}{\sqrt{2\pi}}\exp\left\{-\frac{1}{2}z^{2} + \mu\sqrt{x\delta t}\hspace{1pt}z - \eta m\Delta\delta t\Big[x + k_{y}(\theta_{y} - x)\delta t + \xi_{y}\sqrt{x\delta t}\hspace{1pt}z\Big]\right\}dz} \nonumber\\[2pt]
&\hspace{.25em}+ \int_{z_{0}}^{\infty}{\frac{1}{\sqrt{2\pi}}\exp\left\{-\frac{1}{2}z^{2} + \mu\sqrt{x\delta t}\hspace{1pt}z + \eta m\Delta\delta t\Big[x + k_{y}(\theta_{y} - x)\delta t + \xi_{y}\sqrt{x\delta t}\hspace{1pt}z\Big]\right\}dz}\hspace{.25pt}.
\end{align}
Suppose that $\delta_{T}\leq\max\big\{0^{+},k_{y}-\mu\xi_{y}\big\}^{-1}$. Some straightforward calculations lead to
\begin{equation}\label{eq3.100}
\mathcal{I} \leq \exp\bigg\{\eta m k_{y}\theta_{y}\Delta(\delta t)^{2} + \frac{1}{2}\hspace{1pt}\mu^{2}x\delta t + a\Delta x\delta t\bigg\}\Big\{1 + \Phi(z_{1}) - \Phi(z_{2})\Big\},
\end{equation}
with $a$ defined in \eqref{eq3.76} and
\begin{equation}\label{eq3.101}
z_{1,2} = z_{0} - \big(\mu\mp\eta m\xi_{y}\Delta\delta t\big)\sqrt{x\delta t} = -\frac{k_{y}\theta_{y}\delta t + \big[1 - \big(k_{y}-\mu\xi_{y}\pm\eta m\xi_{y}^{2}\Delta\delta t\big)\delta t\big]x}{\xi_{y}\sqrt{x\delta t}}\hspace{1pt}.
\end{equation}
Suppose that $\delta_{T}\leq\max\big\{0^{+},k_{y}-\mu\xi_{y}+\eta\xi_{y}^{2}\Delta T\big\}^{-1}$, then $z_{2} \leq z_{1} < 0$ and hence we can find $z \in \left[z_{2},z_{1}\right]$ such that
\begin{equation}\label{eq3.103}
\Phi(z_{1}) - \Phi(z_{2}) = \left(z_{1}-z_{2}\right)\phi(z) \leq \left(z_{1}-z_{2}\right)\phi\left(z_{1}\right) = \sqrt{\frac{2}{\pi}}\hspace{1pt}\eta m\xi_{y}\Delta(\delta t)^{3/2}g(x),
\end{equation}
where $g:(0,\infty)\mapsto\RR$ is defined by
\begin{equation}\label{eq3.104}
g(x) = \sqrt{x}\hspace{1pt}\exp\left\{-\frac{\big[k_{y}\theta_{y}\delta t + \big[1 - \big(k_{y}-\mu\xi_{y}+\eta\xi_{y}^{2}\Delta T\big)\delta t\big]x\big]^2}{2\hspace{.5pt}\xi_{y}^{2}x\delta t}\right\}.
\end{equation}
Suppose that $\delta_{T}\leq\frac{\sqrt{2}-1}{\sqrt{2}}\max\big\{0^{+},k_{y}-\mu\xi_{y}+\eta\xi_{y}^{2}\Delta T\big\}^{-1}$. Proceeding as before, we can find an upper bound:
\begin{equation}\label{eq3.105}
g(x) < \nu_{y}\sqrt{2\pi\delta t}\hspace{1pt}.
\end{equation}
Substituting back into \eqref{eq3.100} with \eqref{eq3.103} and \eqref{eq3.105}, we get
\begin{equation}\label{eq3.106}
\mathcal{I} \leq \exp\bigg\{\eta m\big(k_{y}\theta_{y}+2\nu_{y}\xi_{y}\big)\Delta(\delta t)^{2} + \frac{1}{2}\hspace{1pt}\mu^{2}x\delta t + a\Delta x\delta t\bigg\}.
\end{equation}
If $\tilde{x}<0$ and
\begin{equation}\label{eq3.107}
z'_{0}=-\frac{k_{y}\theta_{y}\delta t - (1-k_{y}\delta t)x}{\xi_{y}\sqrt{x\delta t}}\hspace{1pt},
\end{equation}
since $x=-\tilde{x}$, we get
\begin{align}\label{eq3.108}
\mathcal{I} &=
\int_{-\infty}^{z'_{0}}{\frac{1}{\sqrt{2\pi}}\exp\left\{-\frac{1}{2}z^{2} + \mu\sqrt{x\delta t}\hspace{1pt}z - \eta m\Delta\delta t\Big[k_{y}(\theta_{y} + x)\delta t - x + \xi_{y}\sqrt{x\delta t}\hspace{1pt}z\Big]\right\}dz} \nonumber\\[2pt]
&\hspace{.25em}+ \int_{z'_{0}}^{\infty}{\frac{1}{\sqrt{2\pi}}\exp\left\{-\frac{1}{2}z^{2} + \mu\sqrt{x\delta t}\hspace{1pt}z + \eta m\Delta\delta t\Big[k_{y}(\theta_{y} + x)\delta t - x + \xi_{y}\sqrt{x\delta t}\hspace{1pt}z\Big]\right\}dz}\hspace{.25pt}.
\end{align}
Suppose that $\delta_{T}\leq\max\big\{0^{+},k_{y}+\mu\xi_{y}\big\}^{-1}$ and define
\begin{equation}\label{eq3.109}
b = \eta m\big[1 - (k_{y}+\mu\xi_{y})\delta t\big] + \frac{1}{2}\hspace{1pt}\eta^{2}m^{2}\xi_{y}^{2}\Delta(\delta t)^{2}\geq 0.
\end{equation}
Some straightforward calculations lead to the following upper bound:
\begin{equation}\label{eq3.110}
\mathcal{I} \leq \exp\bigg\{\eta m k_{y}\theta_{y}\Delta(\delta t)^{2} + \frac{1}{2}\hspace{1pt}\mu^{2}x\delta t + b\Delta x\delta t\bigg\}\Big\{1 + \Phi(z'_{1}) - \Phi(z'_{2})\Big\},
\end{equation}
where
\begin{equation}\label{eq3.111}
z'_{1,2} = z'_{0} - \big(\mu\mp\eta m\xi_{y}\Delta\delta t\big)\sqrt{x\delta t} = -\frac{k_{y}\theta_{y}\delta t - \big[1 - \big(k_{y}+\mu\xi_{y}\mp\eta m\xi_{y}^{2}\Delta\delta t\big)\delta t\big]x}{\xi_{y}\sqrt{x\delta t}}\hspace{1pt}.
\end{equation}
Clearly $z'_{2} \leq z'_{1}$, and suppose that $\delta_{T}\leq\frac{\sqrt{2}-1}{\sqrt{2}}\max\big\{0^{+},k_{y}+\mu\xi_{y}+\eta\xi_{y}^{2}\Delta T\big\}^{-1}$. Then we can find $z \in \left[z'_{2},z'_{1}\right]$ such that
\begin{equation}\label{eq3.112}
\Phi(z'_{1}) - \Phi(z'_{2}) = \left(z'_{1}-z'_{2}\right)\phi(z).
\end{equation}
First, if $x \leq k_{y}\theta_{y}\delta t\big[1 - \big(k_{y}+\mu\xi_{y}+\eta\xi_{y}^{2}\Delta T\big)\delta t\big]^{-1}$, we deduce from \eqref{eq3.111} and \eqref{eq3.112} that
\begin{equation}\label{eq3.113}
\Phi(z'_{1}) - \Phi(z'_{2}) \leq 2\eta m\nu_{y}\xi_{y}\Delta(\delta t)^{2}.
\end{equation}
Second, if $x > k_{y}\theta_{y}\delta t\big[1 - \big(k_{y}+\mu\xi_{y}+\eta\xi_{y}^{2}\Delta T\big)\delta t\big]^{-1}$, then $0 < z'_{2} \leq z'_{1}$ and hence
\begin{equation}\label{eq3.114}
\Phi(z'_{1}) - \Phi(z'_{2}) \leq \sqrt{\frac{2}{\pi}}\hspace{1pt}\eta m\xi_{y}\Delta(\delta t)^{3/2}h(x),
\end{equation}
where $h:(0,\infty)\mapsto\RR$ is defined by
\begin{equation}\label{eq3.115}
h(x) = \sqrt{x}\hspace{1pt}\exp\left\{-\frac{\big[k_{y}\theta_{y}\delta t - \big[1 - \big(k_{y}+\mu\xi_{y}+\eta\xi_{y}^{2}\Delta T\big)\delta t\big]x\big]^2}{2\hspace{.5pt}\xi_{y}^{2}x\delta t}\right\}.
\end{equation}
Proceeding as before, we find
\begin{equation}\label{eq3.116}
h(x) < \nu_{y}\sqrt{2\pi\delta t}\hspace{1pt},
\end{equation}
which again leads to the upper bound in \eqref{eq3.113}. Substituting back into \eqref{eq3.110}, we get
\begin{equation}\label{eq3.117}
\mathcal{I} \leq \exp\bigg\{\eta m\big(k_{y}\theta_{y}+2\nu_{y}\xi_{y}\big)\Delta(\delta t)^{2} + \frac{1}{2}\hspace{1pt}\mu^{2}x\delta t + b\Delta x\delta t\bigg\}.
\end{equation}
Combining \eqref{eq3.97}, \eqref{eq3.106} and \eqref{eq3.117}, we deduce that independent of the sign of $\tilde{x}$,
\begin{equation}\label{eq3.118}
\mathcal{I} \leq \exp\bigg\{\eta m\big(k_{y}\theta_{y}+2\nu_{y}\xi_{y}\big)\Delta(\delta t)^{2} + \frac{1}{2}\hspace{1pt}\mu^{2}x\delta t + c\Delta x\delta t\bigg\},
\end{equation}
where
\begin{equation}\label{eq3.119}
c = \eta m\big[1 - (k_{y}-|\mu|\xi_{y})\delta t\big] + \frac{1}{2}\hspace{1pt}\eta^{2}m^{2}\xi_{y}^{2}\Delta(\delta t)^{2}.
\end{equation}
Applying \eqref{eq3.93} with $\omega=\frac{m}{N}$ leads to
\begin{equation}\label{eq3.120}
\eta^{2}m^{2}\xi_{y}^{2}\Delta(\delta t)^{2} - 2\eta m(k_{y}-|\mu|\xi_{y})\delta t - 2\eta + 2 \leq 0\hspace{.5pt},
\end{equation}
and so
\begin{equation}\label{eq3.121}
c \leq \eta(m+1) - 1.
\end{equation}
Hence,
\begin{equation}\label{eq3.122}
\mathcal{I} \leq \exp\bigg\{\eta m\big(k_{y}\theta_{y}+2\nu_{y}\xi_{y}\big)\Delta(\delta t)^{2} - \lambda x\delta t + \eta(m+1)\Delta x\delta t\bigg\}.
\end{equation}
Substituting back into \eqref{eq3.95} with this upper bound gives the inductive step. Taking $m=N$ in \eqref{eq3.94},
\begin{equation}\label{eq3.123}
\E\big[\hspace{1pt}\oversymb{\hspace{-1pt}\Theta\hspace{-1pt}}\hspace{1pt}_{T}\big] < \exp\left\{\frac{1}{2}\hspace{1pt}\eta\Delta T^{2}\big(k_{y}\theta_{y}+2\nu_{y}\xi_{y}\big) + \eta\Delta Ty_{0}\right\}.
\end{equation}
The right-hand side is finite and independent of $\delta t$, which concludes the proof.
\end{proof}

Second, consider the symmetrized Euler scheme in \eqref{eq2.5} and let $\hspace{-.5pt}\oversymb{\hspace{.5pt}Y}_{\hspace{-2.5pt}t}=\tilde{y}_{t_{n}}$, $\forall\hspace{.5pt}t\in[t_{n},t_{n+1})$.

\begin{proposition}\label{Prop3.9}
If $\Delta\leq0$ and $T\geq0$ or otherwise, if $\Delta>0$ and $T\leq T^{*}$, with $T^{*}$ from \eqref{eq3.60} -- \eqref{eq3.61}, then there exists $\delta_{T}>0$ so that for all $\delta t\in(0,\delta_{T})$, the first moment of the exponential functional in \eqref{eq3.2} of the symmetrized Euler scheme is uniformly bounded, i.e.,
\begin{equation}\label{eq3.124}
\sup_{\delta t\in(0,\delta_{T})}\hspace{1.5pt}\sup_{t\in[0,T]}\E\big[\hspace{1pt}\oversymb{\hspace{-1pt}\Theta\hspace{-1pt}}\hspace{1pt}_{t}\big]<\infty\hspace{.5pt}.
\end{equation}
\end{proposition}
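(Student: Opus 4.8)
The plan is to reproduce, essentially verbatim, the inductive argument carried out for the reflection scheme in Proposition~\ref{Prop3.8}, exploiting that the symmetrized scheme keeps $\tilde{y}_{t_{n}}\geq0$ for every $n$ (and $\tilde{y}_{t_{0}}=y_{0}>0$), so that $\oversymb{Y}_{t_{n}}=\tilde{y}_{t_{n}}\geq0$ and the ``$\tilde{x}<0$'' branch that complicates the reflection case never arises. If $\Delta\leq0$ the claim is immediate from Lemma~\ref{Lem3.1}, so assume $\Delta>0$ and $T\leq T^{*}$. Since the scheme is non-negative, the governing one-step inequality is exactly~\eqref{eq3.59} as stated, with $\mu$ rather than $|\mu|$ (hence the critical time is the one from~\eqref{eq3.60} -- \eqref{eq3.61}), and Lemma~\ref{Lem3.5} supplies some $\eta\geq1$, independent of $\delta t$, for which~\eqref{eq3.59} holds for all $\omega\in[0,1]$; fix such an $\eta$.

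Next I would prove by induction on $0\leq m\leq N$ that, for all sufficiently small $\delta t$,
\begin{align*}
\E\big[\oversymb{\Theta}_{T}\big] &\leq
\E\bigg[\exp\bigg\{\mu\int_{0}^{t_{N-m}}{\sqrt{\oversymb{Y}_{u}}\,dW_{u}}+\lambda\delta t\sum_{i=0}^{N-m-1}{\oversymb{Y}_{t_{i}}}+\eta m\Delta\delta t\,\oversymb{Y}_{t_{N-m}}\bigg\}\bigg] \\
&\hspace{1em}\times\exp\Big\{\eta\big(0.5\,k_{y}\theta_{y}+\nu_{y}\xi_{y}\big)\Delta(\delta t)^{2}(m-1)m\Big\},
\end{align*}
with $\nu_{y}$ from~\eqref{eq3.70}, the base case $m=0$ being an equality. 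For the inductive step, condition on $\mathcal{G}_{t_{N-m-1}}$ and isolate the inner expectation $\mathcal{I}=\E_{0,x}\!\big[\exp\{\eta m\Delta\delta t\,|x+k_{y}(\theta_{y}-x)\delta t+\xi_{y}\sqrt{x\delta t}\,Z|+\mu\sqrt{x\delta t}\,Z\}\big]$, where $x=\oversymb{Y}_{t_{N-m-1}}=\tilde{y}_{t_{N-m-1}}\geq0$ and $Z\sim\mathcal{N}(0,1)$. This is exactly the quantity treated in the ``$\tilde{x}>0$'' part of the proof of Proposition~\ref{Prop3.8} (with $x=0$ handled as in~\eqref{eq3.97}): I would split the Gaussian integral at the sign change of the reflected increment, complete the square in each of the two pieces, and bound the resulting expression $1+\Phi(z_{1})-\Phi(z_{2})$ by the mean value theorem together with the global maximum of the function $g$ from~\eqref{eq3.104}. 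This yields, for every $\delta t$ below an explicit threshold $\delta_{T}$ (for which, as in Proposition~\ref{Prop3.8}, it suffices to take $\delta_{T}\leq\frac{\sqrt{2}-1}{\sqrt{2}}\max\{0^{+},k_{y}-\mu\xi_{y}+\eta\xi_{y}^{2}\Delta T\}^{-1}$ together with the weaker conditions it implies), the bound
\[
\mathcal{I}\leq\exp\!\Big\{\eta m\big(k_{y}\theta_{y}+2\nu_{y}\xi_{y}\big)\Delta(\delta t)^{2}+\tfrac12\mu^{2}x\delta t+a\Delta x\delta t\Big\},\quad a=\eta m\big[1-(k_{y}-\mu\xi_{y})\delta t\big]+\tfrac12\eta^{2}m^{2}\xi_{y}^{2}\Delta(\delta t)^{2}.
\]

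To close the induction I would apply~\eqref{eq3.59} with $\omega=m/N$, which gives $a\leq\eta(m+1)-1$; since $\tfrac12\mu^{2}=\Delta-\lambda$, the exponent rewrites as $-\lambda\delta t\,\oversymb{Y}_{t_{N-m-1}}+\eta(m+1)\Delta\delta t\,\oversymb{Y}_{t_{N-m-1}}$, and substituting back into the conditional expectation collapses the $\lambda$-sum to $N-m-2$ terms, shifts the $\eta m\Delta\delta t\,\oversymb{Y}$-term from index $N-m$ to $N-m-1$, and turns $(m-1)m$ into $m(m+1)$ in the deterministic prefactor, which is precisely the $m\mapsto m+1$ statement. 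Taking $m=N$ produces the finite, $\delta t$-independent bound $\E[\oversymb{\Theta}_{T}]<\exp\{\tfrac12\eta\Delta T^{2}(k_{y}\theta_{y}+2\nu_{y}\xi_{y})+\eta\Delta Ty_{0}\}$, and the supremum over $t\in[0,T]$ then follows from Lemma~\ref{Lem3.1}. The hard part is bookkeeping rather than conceptual: one must verify that the finite list of smallness conditions on $\delta_{T}$ (chiefly $1-(k_{y}-\mu\xi_{y}+\eta\xi_{y}^{2}\Delta T)\delta t>0$, which keeps the arguments $z_{1},z_{2}$ negative, and the $\frac{\sqrt{2}-1}{\sqrt{2}}$ factor needed for the maximum step) are all of the form $\delta_{T}\leq C^{-1}$ with $C=C(k_{y},\xi_{y},\mu,\lambda,\eta,T)$ independent of $m$ and $\delta t$, and confirm that discarding the reflection branch of Proposition~\ref{Prop3.8} is harmless because $\tilde{y}$ remains non-negative throughout.
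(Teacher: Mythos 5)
Your proposal is correct and is exactly the paper's approach: the authors' entire proof of Proposition~\ref{Prop3.9} is the one-line remark that one follows Proposition~\ref{Prop3.8} and notes $\tilde{x}\geq0$, which is precisely the observation (non-negativity of the symmetrized iterates kills the $\tilde{x}<0$ branch and lets $|\mu|$ revert to $\mu$, hence the critical time from \eqref{eq3.60}--\eqref{eq3.61}) that you spell out in detail.
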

\begin{proof}
We follow the argument of Proposition \ref{Prop3.8} closely and note that $\tilde{x}\geq0$.
\end{proof}

\section{Moment stability in the Heston model}\label{sec:Heston}

In this section, we consider a filtered probability space $\left(\Omega,\mathcal{F},\{\mathcal{F}_t\}_{t\geq0},\mathbb{Q}\right)$ and suppose that the dynamics of the underlying process are governed by the Heston model under measure~$\mathbb{Q}$:
\begin{align}\label{eq4.1}
	\begin{dcases}
	dS_{t} = rS_{t}dt + \sqrt{v_{t}}\hspace{.5pt}S_{t}\hspace{1pt}dW^{s}_{t} \\[3pt]
	dv_{t} \hspace{1pt} = k(\theta-v_{t})dt + \xi\sqrt{v_{t}}\,dW^{v}_{t},
	\end{dcases}
\end{align}
where $W^{s}$ and $W^{v}$ are correlated Brownian motions with constant correlation $\rho\in(-1,1)$, $r$ is an arbitrary non-negative number and $k$, $\theta$, $\xi>0$ as before. We decompose $W^{s}$ and write it as a linear combination of independent Brownian motions $\tilde{W}^{s}$ and $W^{v}$. An application of It\^o's formula leads to
\begin{equation}\label{eq4.2}
S_{T} = S_{0}\exp\bigg\{rT - \frac{1}{2}\int_{0}^{T}{\hspace{-1pt}v_{t}\hspace{1pt}dt} + \sqrt{1-\rho^{2}}\int_{0}^{T}{\hspace{-2pt}\sqrt{v_{t}}\hspace{1.5pt}d\tilde{W}^{s}_{t}}  + \rho\int_{0}^{T}{\hspace{-2pt}\sqrt{v_{t}}\hspace{1.5pt}dW^{v}_{t}}\bigg\}\hspace{.5pt}.
\end{equation}
In particular, we are interested in the evaluation of moments $\E\left[S_{T}^{\omega}\right]$ for $\omega>1$. \citet{Andersen:2007} give several examples of fixed income securities with super-linear payoffs, whose risk-neutral valuation involves the calculation of the second moment. Hence, moment explosions may lead to infinite prices of derivatives. Moreover, establishing the existence of moments of order higher than one of the process and its approximation is also important for the convergence analysis.

Conditioning on the $\sigma$-algebra $\mathcal{G}^{v}_{T}=\sigma\hspace{-1pt}\left(W^{v}_t; \hspace{2pt} 0\leq t\leq T\right)$, we find
\begin{equation}\label{eq4.3}
\E\!\big[S_{T}^{\omega}\big] = S_{0}^{\omega}\E\bigg[\exp\bigg\{\omega rT + \bigg[\frac{1}{2}\hspace{1pt}\omega(\omega-1)-\frac{1}{2}\hspace{1pt}\omega^{2}\rho^{2}\bigg]\int_{0}^{T}{\hspace{-1pt}v_{t}\hspace{1pt}dt} + \omega\rho\int_{0}^{T}{\hspace{-2pt}\sqrt{v_{t}}\hspace{1.5pt}dW^{v}_{t}}\bigg\}\bigg].
\end{equation}
Define the \textit{explosion time} of the moment of order $\omega$ to be the first time beyond which the moment $\E\left[S_{T}^{\omega}\right]$ will cease to exist, i.e.,
\begin{equation}\label{eq4.4}
T^{*}(\omega) \equiv \sup\big\{t\geq0 :\hspace{2pt} \E\left[S_{t}^{\omega}\right]<\infty\big\}.
\end{equation}
If the moment does not explode in finite time, then $T^{*}(\omega)=\infty$.

For convenience, we fix $S_{0}=1$ and $r=0$. Proposition \ref{Prop3.2} derives sharp conditions on the finiteness of moments of the process, whereas Propositions \ref{Prop3.4} and \ref{Prop3.6} to \ref{Prop3.9} give lower bounds on the explosion times of moments of the different discretizations. For illustration, we plot in Figure~\ref{fig:1} the explosion time and the corresponding lower bounds with different schemes against the model parameters. Since
\begin{equation}\label{eq4.5}
\Delta = \frac{1}{2}\hspace{1pt}\omega(\omega-1)\hspace{.5pt},
\end{equation}
Propositions \ref{Prop3.2}, \ref{Prop3.4} and \ref{Prop3.6} to \ref{Prop3.9} ensure the finiteness of the first moment of the process and its discretizations for all $T$, i.e., $T^{*}(1)=\infty$. On the other hand, we infer from Figure \ref{fig:1a} that both the explosion time for the exact process and the lower bounds with the explicit Euler discretizations approach infinity as $\omega$ approaches one, i.e., that $\lim_{\omega\to1^{+}}T^{*}(\omega)=\infty$. This ensures the uniform boundedness of moments, for $\omega$ sufficiently close to one, of the explicit schemes even for very long maturities, an important ingredient in proving the strong convergence of the approximation process \citep[see][]{Cozma:2015b}. Note that the green and the yellow curves in Figure \ref{fig:1} overlap when $\rho\geq0$.
\begin{figure}[htb]
\centering
\begin{subfigure}{.5\textwidth}
  \centering
  \includegraphics[width=.97\linewidth,height=1.7in]{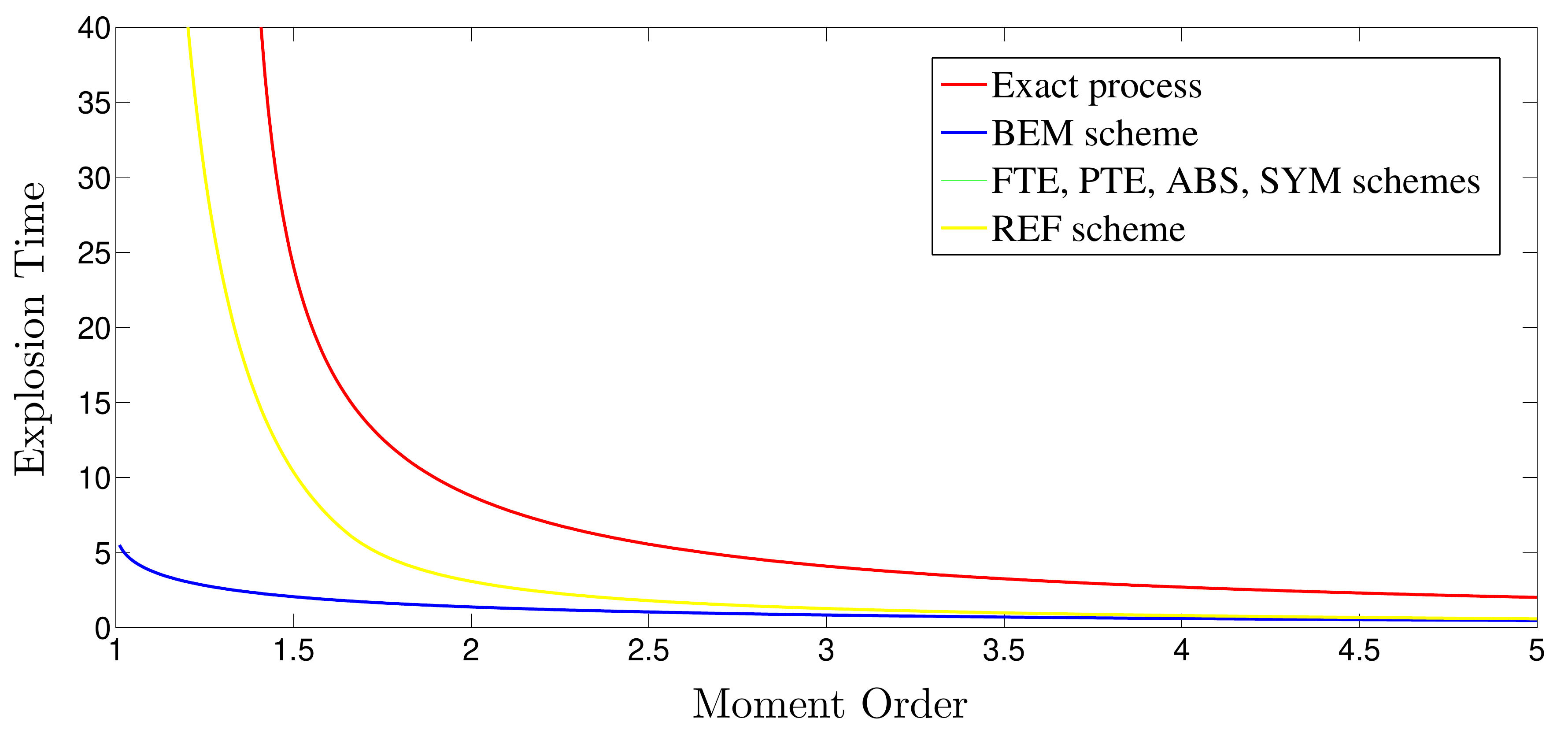}
  \caption{Against the moment order $\omega$.}
  \label{fig:1a}
\end{subfigure}%
\begin{subfigure}{.5\textwidth}
  \centering
  \includegraphics[width=.97\linewidth,height=1.7in]{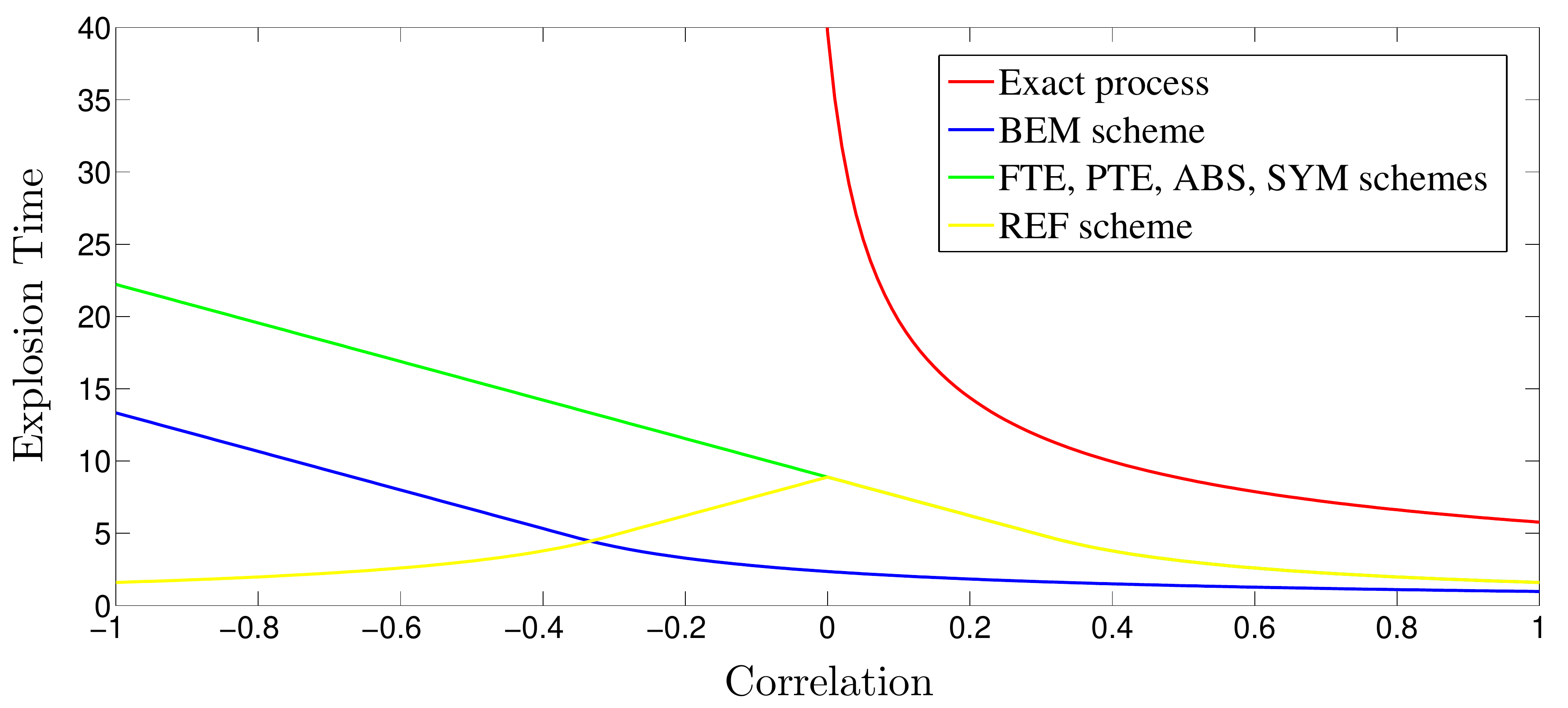}
  \caption{Against the correlation coefficient $\rho$.}
  \label{fig:1b}
\end{subfigure} \\[1em]
\begin{subfigure}{.5\textwidth}
  \centering
  \includegraphics[width=.97\linewidth,height=1.7in]{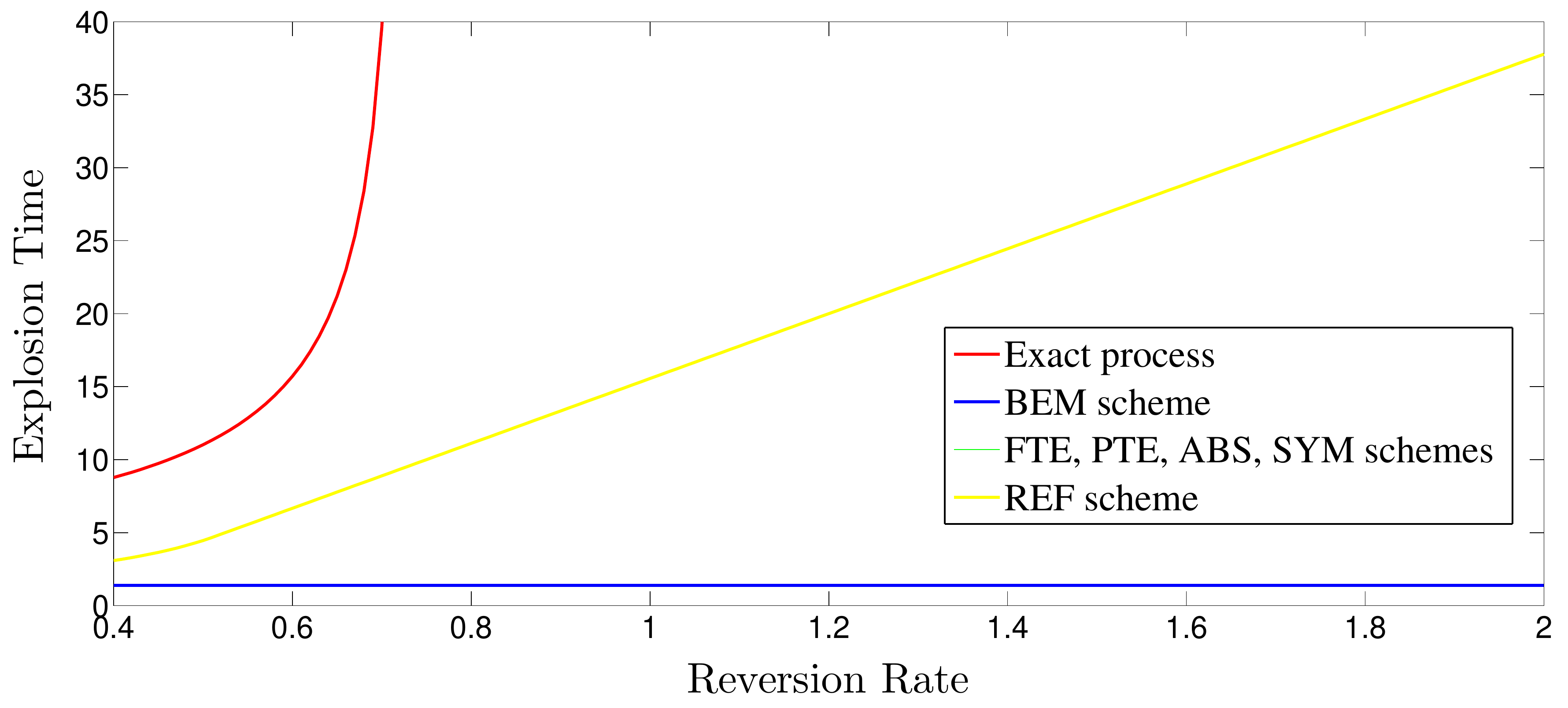}
  \caption{Against the mean reversion rate $k$.}
  \label{fig:1c}
\end{subfigure}%
\begin{subfigure}{.5\textwidth}
  \centering
  \includegraphics[width=.97\linewidth,height=1.7in]{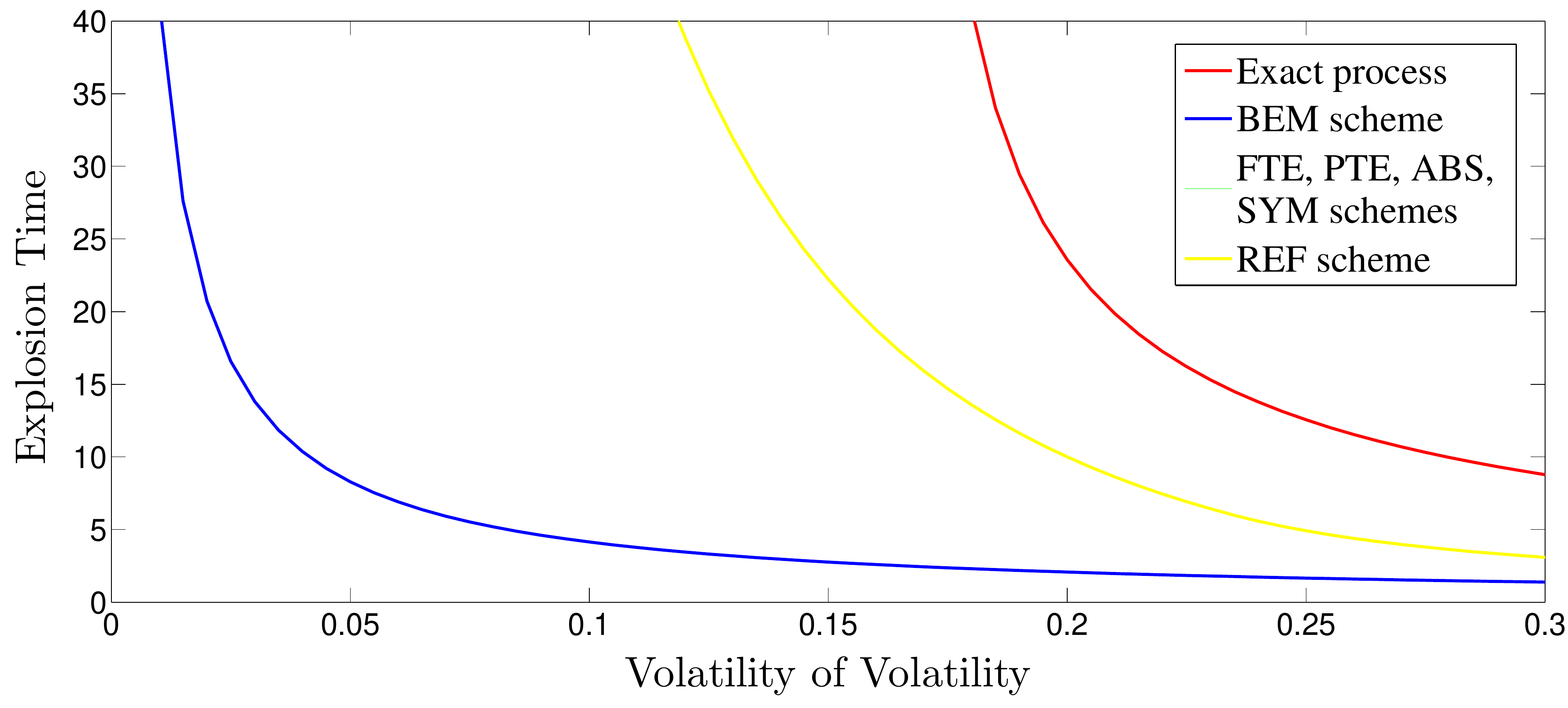}
  \caption{Against the volatility of volatility $\xi$.}
  \label{fig:1d}
\end{subfigure}%
\caption{The explosion time of moments of the exact process and the lower bounds with different discretization schemes, plotted against the model parameters when $k=0.4$, $\xi=0.3$, $\rho=0.5$ and $\omega=2$, except for the one which is varied.}
\label{fig:1}
\end{figure}

The data in Figure \ref{fig:1b} suggest that there exists a critical correlation level $\rho^{*}$ such that $\E\left[S_{T}^{\omega}\right]<\infty$ for all $T$, provided $\rho\leq\rho^{*}$, and $\E\left[S_{T}^{\omega}\right]=\infty$ for some $T$, provided $\rho>\rho^{*}$. When $k=0.4$, $\xi=0.3$ and $\omega=2$, we find $\rho^{*}=-0.04$. Moreover, we also infer from the data in Figure \ref{fig:1b} that decreasing the correlation has a damping effect on the second moment of the process, and that for strongly negative correlations between the underlying process and the variance, as is usually the case in equity markets (the so-called \textit{leverage effect}),~the~lower bounds on the explosion time of the second moment -- with all but the reflection scheme -- are above the typical maturity range of equity derivatives.

The data in Figures \ref{fig:1c} and \ref{fig:1d} indicate that increasing the speed of mean reversion and decreasing the volatility of volatility have a damping effect on the second moment of the process and its explicit Euler discretizations, which is to be expected considering that larger values of $k$ and smaller values of $\xi$ lead to smaller fluctuations in the variance over time. Next, we assign the following values to the underlying model parameters: $k=0.4$, $\theta=0.12$, $\xi=0.3$, $v_{0}=0.12$ and $\rho=1$, and note that the Feller condition is satisfied. Henceforth, we estimate the second moment by a standard Monte Carlo estimator.

From \eqref{eq3.10}, the explosion time of the second moment of the process is: $T^{*}=5.77$. On the other hand, we infer from the data in Figure \ref{fig:2} that for sufficiently small values of the time step -- for instance, when $\delta t=0.02$ -- the second moment with the BEM scheme will cease to exist after some time $T^{\text{\scalebox{0.75}[0.75]{BEM}}}$ close to $T^{*}$. The first sign of moment explosion can be observed when $T^{\text{\scalebox{0.75}[0.75]{BEM}}}=5.98$, however this phenomenon is more pronounced when $T=6.14$, where the moment jumps to $9.7\times10^{4}$.
\begin{figure}[htb]
\begin{center}
\includegraphics[width=1.0\columnwidth,keepaspectratio]{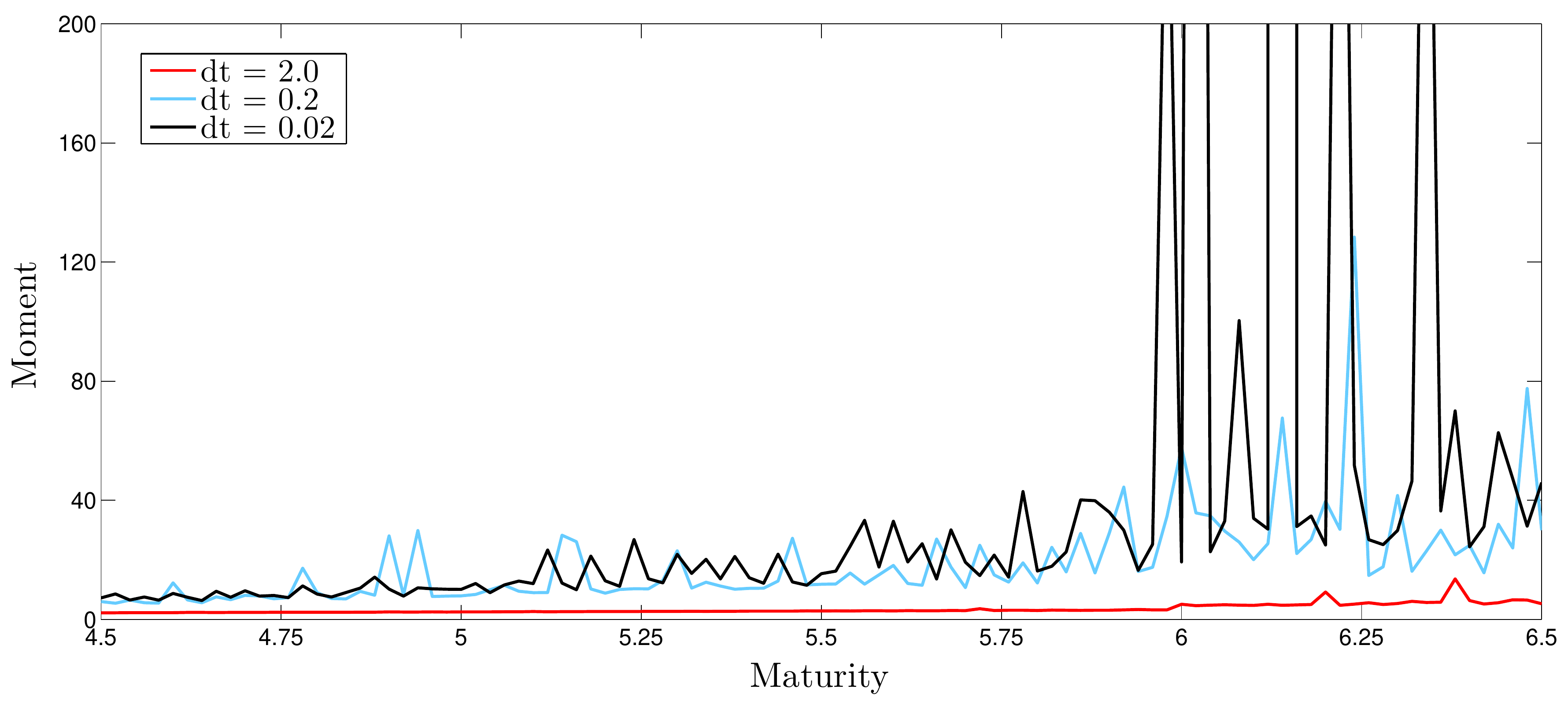}
\caption{The second moment of the Heston model, calculated using the BEM scheme and $3\times10^{7}$ simulations, plotted against the maturity $T$.}
\label{fig:2}
\end{center}
\end{figure}

By close inspection of the data in Figure \ref{fig:3}, we infer that for sufficiently small values of the time step $\delta t$, the approximation to the second moment with either the implicit or one of the explicit schemes considered in this paper explodes after some critical time close to $T^{*}$. Combined with the previous observation, this suggests that the explosion times of the second moments of the process and its discretizations become close as we increase the number of time steps. Therefore, we deduce from the data in Figure \ref{fig:1} that the lower bounds for the partial truncation, the full truncation, the absorption and the symmetrized Euler schemes are sharper than the ones for the drift-implicit and the reflection schemes.
\begin{figure}[htb]
\begin{center}
\includegraphics[width=1.0\columnwidth,keepaspectratio]{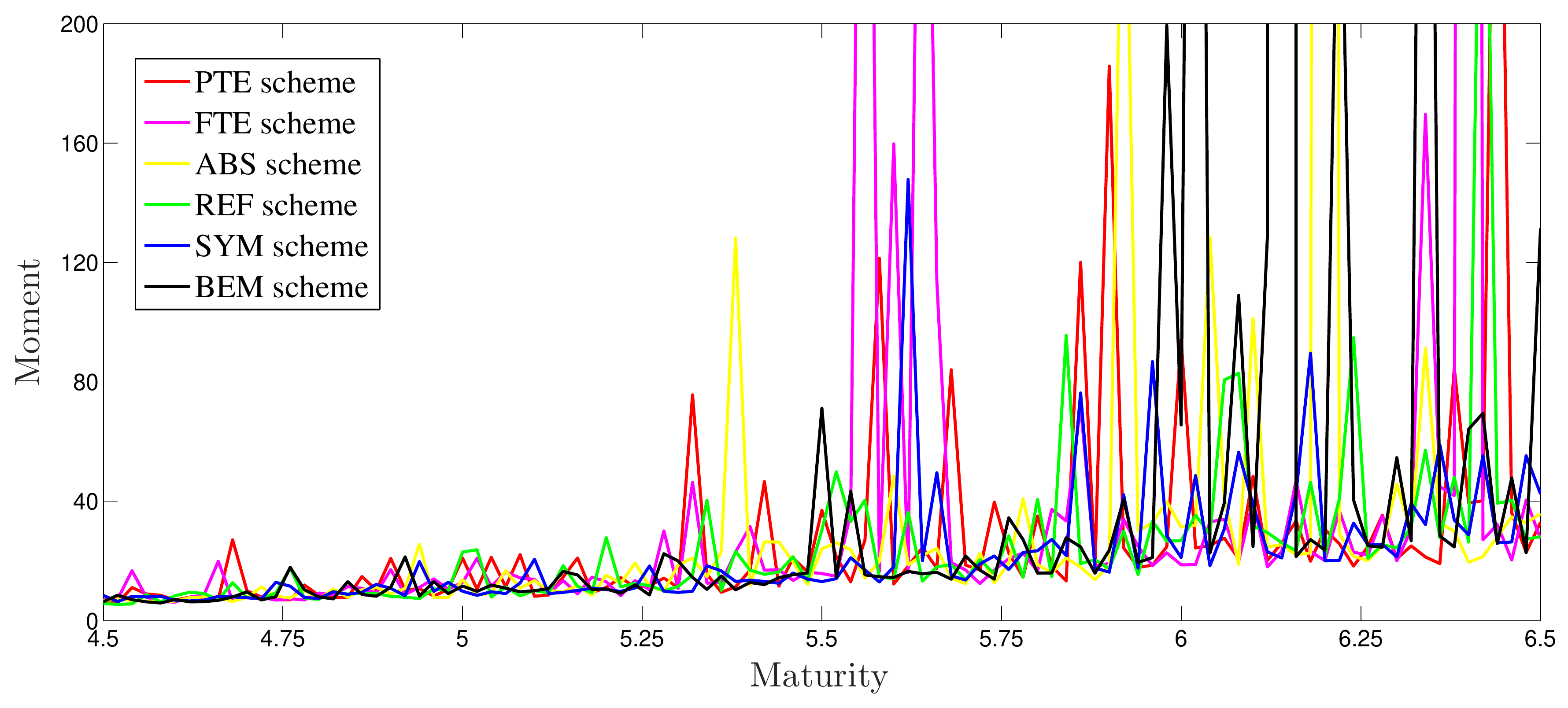}
\caption{The second moment of the Heston model, calculated with different discretization schemes, $\delta t=0.02$ and $3\times10^{7}$ simulations, plotted against the maturity $T$.}
\label{fig:3}
\end{center}
\end{figure}

\section{Conclusions}\label{sec:conclusion}

In this paper, we have established the uniform exponential integrability of functionals of the Cox-Ingersoll-Ross process and a number of its discretization schemes often encountered in the finance literature. One consequence of this result with obvious practical implications is the stability of moments of numerical approximations for a large class of SDEs arising in finance, which in turn is used to prove strong convergence \citep{Higham:2002, Cozma:2015b}. An open question is whether we can find sharp conditions on the~exponential integrability of Euler approximations for the CIR process.

\bibliographystyle{apa}
\bibliography{references}

\end{document}